\newcommand{\beq}{\begin{equation}}
\newcommand{\eeq}{\end{equation}}
\newcommand{\ba}{\begin{alinged}}
\newcommand{\ea}{\begin{aligned}}
\newcommand{\ud}{\mathrm{d}}
\newcommand{\cD}{{\mathcal D}}
\newcommand{\rz}{{\mathbb R}}
\newcommand{\nz}{{\mathbb N}}
\DeclareMathOperator{\supp}{supp}
\DeclareMathOperator{\ran}{ran}
\newcommand{\eins}{\mathds{1}}
\newtheorem{theorem}{Theorem}[section]
\newtheorem{prop}[theorem]{Proposition}
\newtheorem{cor}[theorem]{Corollary}
\theoremstyle{definition}
\newtheorem{remark}[theorem]{Remark}
\begin{document}
\title[Bound states of a pair of particles]{Bound states of~a~pair of~particles
on~the~half-line with~a~general interaction potential}
\author[Egger]{Sebastian Egger}
\address[S. Egger]{Department of Mathematics, Technion-Israel Institute of Technology, 629 Amado Building, Haifa 32000, Israel}
\email{egger@technion.ac.il}
\author[Kerner]{Joachim Kerner}
\address[J. Kerner]{Department of Mathematics and Computer Science, FernUniversit\"{a}t in Hagen, 58084 Hagen, Germany}
\email{Joachim.Kerner@fernuni-hagen.de}
\author[Pankrashkin]{Konstantin Pankrashkin}%
\address[K. Pankrashkin]{Laboratoire de Math\'ematiques d'Orsay, Univ.~Paris-Sud, CNRS, Universit\'e Paris-Saclay, 91405 Orsay, France}
\email{konstantin.pankrashkin@math.u-psud.fr}
\begin{abstract} In this paper we study an interacting two-particle system on the positive half-line $\mathbb{R}_+$. We focus on spectral properties of the Hamiltonian for a large class of two-particle potentials. We characterize the essential spectrum and prove, as a main result, the existence of eigenvalues below the bottom of it. We also prove that the discrete spectrum contains only finitely many eigenvalues.
\end{abstract}
\maketitle

%
\section{Introduction}

In this paper we are concerned with spectral properties of an interacting two-particle system moving on the half-line $\mathbb{R}_+:=(0,\infty)$. More specifically, we consider the (two-particle) Hamiltonian
in $L^2(\rz_+\times\rz_+)$ given by
\begin{equation}\label{HamiltonianIntro}
H=-\frac{\partial^2}{\partial x_1^2}-\frac{\partial^2}{\partial x_2^2}+v\left(\dfrac{|x_1-x_2|}{\sqrt{2}}\right)\, ,
\end{equation} 
with an interaction potential $v:\rz_+ \rightarrow \mathbb{R}$ belonging to a large class covering all physically meaningful potentials including, e.g., quadratic and Lennard-Jones-type potentials. Note that the factor $\tfrac{1}{\sqrt{2}}$ in the argument of $v$ is only chosen for further convenience.
Very informally, our main result is that if the potential $v$ creates a bound state for the respective one-dimensional Schr\"odinger operator
on the half-line, then it creates at least one eigenvalue of $H$ with a strictly lower energy.

The present work is a far-reaching extension of the previous work~\cite{KernerMuhlenbruch2} in which a similar result was obtained for a specific class of hard wall potentials $v$.
As described in~\cite{KernerElectronPairs,KernerInteractingPairs}, the presence of a discrete spectrum leads to a (Bose-Einstein) condensation of pairs in a gas of bosonic, non-interacting pairs with each pair described by~\eqref{HamiltonianIntro}. A condensation of pairs of electrons, on the other hand, is the key mechanism in the formation of the superconducting phase in type-I superconductors~\cite{CooperBoundElectron,BCSI}. Hence, the extension of the model discussed in this paper is expected to have also interesting applications in solid-state physics.
One should emphasize on the fact that only very few two-particle problems admit an explicit solution, see e.g. \cite{busch}, so qualitative results are of a particular importance.

%


Let us introduce some notions used throughout the paper: To keep the notation as simple as possible, we will work with real-valued Hilbert spaces.
For a self-adjoint and semi-bounded operator $A$ we denote by $\cD(A)$ its domain and by
$\cD[A]$ the domain of the associated bilinear form (which will often referred to as the form domain of $A$). The bilinear form itself will be denoted
as $A[\cdot,\cdot]$, the spectrum and the essential spectrum of $A$ will be denoted by $\sigma(A)$
and $\sigma_\text{ess}(A)$ respectively.

Let $v$ be a real-valued potential on $\rz_+$ with the following properties:
\begin{enumerate}
	\item[(A)] $v \in L^{1}_\text{loc}(\mathbb{R}_+)$ and $\max\{-v,0\}\in L^{\infty}(\mathbb{R}_+)$,
	\item[(B)] The one-particle Schr\"odinger operator
	\[
	h:=-\dfrac{\ud^2}{\ud x^2}+v(x)
	\] in $L^2(\rz_+)$, which is rigorously	defined through its form
	\begin{gather*}
	h[\varphi,\varphi]=\int_{\mathbb{R}_+} \big((\varphi'(x)^2+v(x)\varphi(x)^2\big)\, \ud x\,,\\
		\cD[h]=\Big\{\varphi\in H^1(\rz_+): \int_{\rz_+} v(x)\varphi(x)^2\, \ud x<+\infty\Big\}\,,
	\end{gather*}
		is such that the bottom of the spectrum $\inf \sigma(h)=:\varepsilon_0$ is an isolated eigenvalue,
	\item[(C)] The bottom eigenvalue is strictly lower than the values of $v$ at infinity, i.e. it holds $\varepsilon_0 < \liminf_{x \rightarrow \infty} v(x):=v_{\infty}$.
\end{enumerate}
The assumption (C) is to avoid potentials with a pathological behavior, and it holds
for the physically reasonable cases. It is well known that that the assumptions (B) and (C) are satisfied in two important cases:
\begin{itemize}
\item[(a)] for $v_\infty=+\infty$,
\item[(b)] $v_\infty<\infty$ and $v-v_\infty\in L^1(\rz_+)$ with
$\displaystyle\int_{\rz_+}\big( v(x)-v_\infty\big)\ud x <0$
\end{itemize}
(see Propositions~\ref{cond1} and~\ref{cond2} in Appendix).

For potentials $v$ which are sufficiently regular near $0$, for example, for $v|_{(0,1)}\in L^2(0,1)$,
it is standard to see that the above operator $h$ corresponds to the \emph{Neumann}
condition $\varphi'(0)=0$ at the origin.
In general, the operator $h$ can be in the limit point case at $0$ (if $v$ diverges very fast at zero) in which case the characterization of boundary conditions is more involved. However, this subtlety is of no importance for our constructions in the following.

The associated two-particle Schr\"odinger operator
\[
H=-\Delta + v\left(\dfrac{|x_1-x_2|}{\sqrt{2}}\right)
\]
in $L^2(\rz_+^2)$ is rigorously defined through its form,
\begin{gather*}
H[\varphi,\varphi]=\iint_{\rz_+^2} \Big(\big|\nabla \varphi(x_1,x_2)\big|^2+v\left(\dfrac{|x_1-x_2|}{\sqrt{2}}\right) \varphi(x_1,x_2)^2\Big)\, \ud x_1\ud x_2\,,\\
\cD[H]=\big\{\varphi\in H^1(\rz_+^2):\, H[\varphi,\varphi]<\infty\big\}\,;
\end{gather*}
note the factor 
$\tfrac{1}{\sqrt{2}}$ in the argument of $v$ which is chosen for convenience in order to have less factors in later computations.
Our results are summarized as follows:
\begin{theorem}\label{TheoremMainResult}
The essential spectrum of $H$ is $[\varepsilon_0,+\infty)$, and its discrete spectrum is non-empty and finite.
\end{theorem}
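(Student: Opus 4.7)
The plan is to pass to center-of-mass / relative coordinates $y_1 = (x_1+x_2)/\sqrt{2}$, $y_2 = (x_1-x_2)/\sqrt{2}$, which map $\rz_+^2$ unitarily onto the wedge $\Omega := \{(y_1,y_2) \in \rz^2 : y_1 > |y_2|\}$ and turn $H$ into $-\partial_{y_1}^2 - \partial_{y_2}^2 + v(|y_2|)$. Let $\tilde\varphi_0(y_2) := \varphi_0(|y_2|)$ be the even extension of the ground state of $h$, so that $(-\partial_{y_2}^2 + v(|y_2|))\tilde\varphi_0 = \varepsilon_0 \tilde\varphi_0$ on all of $\rz$. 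This $\tilde\varphi_0$ is the central object in every part of the argument.

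For the essential spectrum I proceed in two directions. To show $[\varepsilon_0,\infty) \subset \sigma_{\text{ess}}(H)$, for each $\lambda = \varepsilon_0+k^2$ I build singular Weyl sequences of the form $\Psi_n(y_1,y_2) = N_n^{-1/2}\cos(ky_1)\chi_n(y_1)\tilde\varphi_0(y_2)\eta_n(y_2)$, with $\chi_n$ a bump centered at $y_1 \sim n$ and $\eta_n$ a slow cutoff in $y_2$ chosen so that $\supp(\Psi_n) \subset \Omega$ for $n$ large; the decay of $\tilde\varphi_0$ guaranteed by (C) gives $\|(H-\lambda)\Psi_n\|/\|\Psi_n\| \to 0$. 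For $\inf\sigma_{\text{ess}}(H) \geq \varepsilon_0$ I would use Persson's theorem: for $\psi \in \cD[H]$ supported in $\{y_1>R\}$, a slice-wise min-max in $y_2$ gives $H[\psi,\psi] \geq (\inf_{y_1>R}\varepsilon_0^N(y_1))\|\psi\|^2$, where $\varepsilon_0^N(y_1)$ is the bottom Neumann eigenvalue of $-\partial_{y_2}^2+v(|y_2|)$ on $(-y_1,y_1)$; one then checks $\varepsilon_0^N(y_1) \to \varepsilon_0$ as $y_1 \to \infty$ by comparing with the ground state $\tilde\varphi_0$ on $\rz$.

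For the existence of a bound state I try $\Psi_f(y_1,y_2) := f(y_1)\tilde\varphi_0(y_2)$ on $\Omega$, with $f$ to be chosen. A slice-wise integration by parts in $y_2$ on $(-y_1,y_1)$, using the eigenvalue equation and the parity of $\tilde\varphi_0$, produces
\[
H[\Psi_f,\Psi_f] - \varepsilon_0\|\Psi_f\|^2 = \int_0^\infty \bigl(f'(y_1)^2\,\mu(y_1) + f(y_1)^2\,w(y_1)\bigr)\,\ud y_1,
\]
where $\mu(y_1) := \int_{-y_1}^{y_1}\tilde\varphi_0(y_2)^2\,\ud y_2$ and $w(y_1) := 2\tilde\varphi_0'(y_1)\tilde\varphi_0(y_1) = (\tilde\varphi_0(\cdot)^2)'(y_1)$. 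The key point is that $\int_0^\infty w(y_1)\,\ud y_1 = [\tilde\varphi_0^2]_0^\infty = -\tilde\varphi_0(0)^2 < 0$ in the regular case (and a suitable analogue in the limit-point case), so the potential-like term $w$ carries strictly negative total mass. Plugging in $f_\epsilon(y_1) = \ue^{-\epsilon y_1}$, as $\epsilon \to 0^+$ the first integral is $O(\epsilon)$ while $\int_0^\infty f_\epsilon^2 w \to -\tilde\varphi_0(0)^2$ by dominated convergence. For $\epsilon$ small the right-hand side is therefore negative, and combined with $\sigma_{\text{ess}}(H)=[\varepsilon_0,\infty)$ the min-max principle yields an eigenvalue of $H$ strictly below $\varepsilon_0$.

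The finiteness of $\sigma_{\text{disc}}(H)$ is where I expect the main obstacle to lie: the trial-function computation above, applied on any tail $\{y_1>R\}$, produces a negative direction, so a naive Neumann bracketing $\Omega = (\Omega\cap\{y_1<R\})\cup(\Omega\cap\{y_1>R\})$ does not yield a finite count. My plan is to reduce to an effective one-dimensional operator via a slice-wise projection onto $\tilde\varphi_0$. Using the spectral gap $\varepsilon_0 < \min\{\tilde\varepsilon_1, v_\infty\}$ between $\tilde\varphi_0$ and the rest of the spectrum of $-\partial_{y_2}^2+v(|y_2|)$ on $\rz$, the component of $\psi$ orthogonal (in $y_2$, on each slice) to $\tilde\varphi_0$ contributes transverse energy at least $\varepsilon_0+\eta$ for some $\eta>0$ once $y_1$ is large, and so can supply only finitely many eigenvalues below $\varepsilon_0$; the $\tilde\varphi_0$-component reduces the problem to a one-dimensional Schr\"odinger-type operator on $\rz_+$ with effective weight $\mu$ and effective potential built from $w$, which decays exponentially from (B)--(C), allowing a standard 1D Bargmann-type bound. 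The delicate technical point is that $\tilde\varphi_0|_{(-y_1,y_1)}$ is not exactly the Neumann ground state on the finite slice (its normal derivatives at $y_2=\pm y_1$ are nonzero), so the projection and the coupling between the two components have to be handled with correction terms that need to be controlled uniformly in $y_1$.
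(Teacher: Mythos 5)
Your overall structure mirrors the paper's (rotation, use of the parity/transverse ground state $\psi_0$, test function $f(y_1)\psi_0(y_2)$, Feshbach-type projection, Bargmann bound), and the slice-wise reduction of $H[\Psi_f,\Psi_f]-\varepsilon_0\|\Psi_f\|^2$ to the one-dimensional quantity $\int_0^\infty\big(f'^2\mu+f^2w\big)\,\ud y_1$ is correct. The genuine gap is exactly the one you wave away in the parenthesis ``(and a suitable analogue in the limit-point case).'' Under assumption (A), $v$ may be non-integrable near $0$, and then the form-domain constraint $\int v\psi_0^2<\infty$ forces $\psi_0(0)=0$; in general one only has $(\psi_0\psi_0')(0)=0$ (Proposition~\ref{Randterme}), not a Neumann condition. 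When $\psi_0(0)=0$, the total mass you rely on, $\int_0^\infty w\,\ud y_1=-\psi_0(0)^2$, vanishes, and with $F(y)=\int_0^{y}\psi_0^2$ (so $\mu=2F$, $w=F''$, $F(0)=F'(0)=0$) the test function $f_\epsilon=\ue^{-\epsilon y_1}$ gives, after integration by parts,
\[
\int_0^\infty\big(f_\epsilon'^2\mu+f_\epsilon^2 w\big)\,\ud y_1
=2\epsilon^2\int_0^\infty F\,\ue^{-2\epsilon y_1}\,\ud y_1
+2\epsilon\int_0^\infty \psi_0^2\,\ue^{-2\epsilon y_1}\,\ud y_1>0
\]
for every $\epsilon>0$; there is no ``suitable analogue,'' the sign is simply wrong. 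The fix is the whole content of the paper's choice: take $f=F^\rho$ with $\rho\in(\tfrac12,1)$, truncated at infinity. Then $f(0)=0$ so no boundary term appears at all, and the pointwise identity $Ff'^2-F'ff'=\rho(\rho-1)F^{2\rho-1}F'^2<0$ supplies the negativity from the factor $\rho(\rho-1)$ rather than from boundary data — this is robust in both the regular and the singular case.

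For finiteness you correctly flag the obstruction (the slice-wise projection onto $\tilde\varphi_0|_{(-y_1,y_1)}$ generates boundary terms at $y_2=\pm y_1$) but do not resolve it, and I do not think the ``correction terms'' can be controlled uniformly in $y_1$ by a direct estimate. The paper sidesteps the slice-wise projection altogether: after an IMS partition into a neighborhood of the diagonal (compact resolvent there) and its complement, the complement piece has Dirichlet data on a line $x_2=x_1+R$ and can therefore be extended by zero to the full half-plane $\rz_+\times\rz$, where the operator is an exact tensor sum $h\otimes\eins+\eins\otimes q$ minus a bounded, localized $W_R$. On the half-plane the projection $\Pi=P\otimes\eins$ onto $\psi_0\otimes L^2(\rz)$ is a genuine spectral projection with no boundary correction, the off-diagonal coupling is absorbed via Cauchy--Schwarz and the spectral gap $E_2-\varepsilon_0>0$, and the effective one-dimensional potential inherits exponential decay from the Agmon estimate for $\psi_0$, giving the Bargmann bound. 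That extension step is the missing idea in your plan.
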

We remark that the presence of a non-empty discrete spectrum is probably the most important result.
It relies on a rather involved construction of a test function whose structure was proposed in \cite{lupan}
for a different problem involving specific potentials with explicitly known ground states, and
it also appeared in e.g.~\cite{hm,P}. So we propose another extension to rather general operators
and hope that it can be used beyond our framework {(See e.g. Remark~\ref{kp1} below.)}
The proof of the finiteness of the discrete spectrum essentially follows the scheme of~\cite{MT}
for another specific operator and essentially represents a realization of the Feshbach
projection method, which was also used in \cite{KP}. A new ingredient is delivered by the fact that
some new properties of the ground state of $h$ should be established first.
The fact that we work with rather singular potentials $v$,
which can be non-integrable near $0$, brings a number
of technical subtleties concerning the regularity of functions, and we collect the respective results
on one-dimensional Schr\"odinger operators in Section~\ref{Appendix}.

\section{Proof of Theorem~\ref{TheoremMainResult}}\label{SectionProof}
\subsection{Reductions by symmetries}
Let us first perform some standard reductions in order to deal with a model case.
Denote
\begin{equation*}
\Omega:=\{(x_1,x_2) \in \mathbb{R}\times\mathbb{R}_+: |x_1|< x_2 \}
\end{equation*}
and consider the diffeomorphism (rotation by $\pi/4$)
\[
\Phi:\Omega\to\rz_+^2\, , \quad
\Phi(x_1,x_2)=\dfrac{1}{\sqrt{2}}\,(x_2+x_1,x_2-x_1)\, ,
\]
and the unitary transform (pull back) $U:L^2(\rz_+^2)\to L^2(\Omega)$,
$U\varphi=\varphi\circ \Phi\,$. Using the standard change of variables one easily checks that
\[
U\cD[H]=\cD[Q]\, , \quad H[\varphi,\varphi]=Q[U\varphi,U\varphi]\, ,
\]
with $Q$ being the operator in $L^2(\Omega)$ given by its form
\begin{gather*}
Q[\varphi,\varphi]=\iint_\Omega \big(|\nabla\varphi(x_1,x_2)|^2+v\big(|x_1|\big)\varphi(x_1,x_2)^2\big)\, \ud x_1\ud x_2\, ,\\
\cD[Q]=\big\{\varphi\in H^1(\Omega):\, Q[\varphi,\varphi]<\infty\big\}\,,
\end{gather*}
which is then unitarily equivalent to $H$. To use the parity with respect to $x_1$
we consider the right half of $\Omega$,
\[
\Omega_0:=\big\{(x_1,x_2) \in \mathbb{R}^2: \ 0< x_1< x_2 \big\},
\]
and the unitary transform
\begin{gather*}
\Theta:L^2(\Omega)
\ni \varphi\mapsto (U_+\varphi,U_-\varphi)\in L^2(\Omega_0)\times L^2(\Omega_0)\, ,\\
U_\pm\varphi(x_1,x_2)=\dfrac{\varphi(x_1,x_2)\pm \varphi(-x_1,x_2)}{\sqrt 2}\,.
\end{gather*}
If one introduces self-adjoint operators $Q_\pm$ in $L^2(\Omega_0)$ given by
\begin{align*}
Q_\pm[\varphi,\varphi]&=\iint_{\Omega_0} \big(|\nabla\varphi(x_1,x_2)|^2+v(x_1)\varphi(x_1,x_2)^2\big)\, \ud x_1\ud x_2\, ,\\
\cD(Q_+)&=\big\{\varphi\in H^1(\Omega_0):\, Q_+[\varphi,\varphi]<\infty\big\}\, ,\\
\cD(Q_-)&=\big\{\varphi\in H^1_0(\Omega_0):\, Q_-[\varphi,\varphi]<\infty\big\}\, ,
\end{align*}
then one easily checks that 
\[
\cD[Q_\pm]:=U_\pm\cD[Q], \qquad
Q[\varphi,\varphi]=Q_+[U_+\varphi,U_+\varphi]+Q_-[U_-\varphi,U_-\varphi].
\]
It follows that $Q$ (hence, also $H$) is unitarily equivalent to $Q_+\oplus Q_-$.
As the bilinear form of $Q_+$ is an extension
of that for $Q_-$, it follows by the min-max principle
that  $\lambda:=\inf\sigma_\text{ess}(Q_+)\le\inf\sigma_\text{ess}(Q_-)$
and that the number of eigenvalues of $Q_-$ below $\lambda$ does not exceed that for  $Q_+$.

Therefore, $\inf\sigma_{\text{ess}}(H)=\min\big\{\inf\sigma_\text{ess}(Q_{-}),\inf\sigma_\text{ess}(Q_+)\big\}=\inf\sigma_\text{ess}(Q_+)$, and the non-emptyness and
finiteness of the discrete spectrum of $Q_+$ will imply the non-emptyness and finiteness of the discrete spectrum of $H$. This shows that Theorem~\ref{TheoremMainResult} becomes a consequence of the following assertion, whose proof will be given in the rest of the section:
\begin{prop}\label{prop1}
The essential spectrum of the operator $Q_+$ is $[\varepsilon_0,+\infty)$, and its discrete spectrum is non-empty and finite.
\end{prop}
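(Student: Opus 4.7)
The plan is to handle the three assertions---characterisation of $\sigma_{\mathrm{ess}}(Q_+)$, non-emptiness of the discrete spectrum below $\varepsilon_0$, and its finiteness---by three separate arguments. Let $\psi_0>0$ denote the normalised ground state of $h$; by assumption~(C) it decays exponentially, and in the regular case $\psi_0'(0)=0$ with $\psi_0(0)>0$.

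\textbf{The essential spectrum.} The inclusion $[\varepsilon_0,+\infty)\subseteq\sigma_{\mathrm{ess}}(Q_+)$ I would obtain through Weyl sequences of the form $\psi_n(x_1,x_2)=\psi_0(x_1)\cos(k x_2)\eta_n(x_2)$ with $k=\sqrt{\mu-\varepsilon_0}$ and $\eta_n$ a shifted, $L^2$-normalised smooth cutoff whose support escapes to $+\infty$ in $x_2$; the exponential decay of $\psi_0$ ensures these trial functions are essentially supported in $\Omega_0$, and a direct computation gives $\|(Q_+-\mu)\psi_n\|=o(\|\psi_n\|)$. For the reverse inclusion $\inf\sigma_{\mathrm{ess}}(Q_+)\geq\varepsilon_0$ I would combine Persson's characterisation with the ground-state substitution $\varphi=u\psi_0$: using $-\psi_0''+v\psi_0=\varepsilon_0\psi_0$ and $\psi_0'(0)=0$, integration by parts in $x_1$ yields
\begin{equation*}
Q_+[\varphi,\varphi]-\varepsilon_0\|\varphi\|^2 = \iint_{\Omega_0}|\nabla u|^2\psi_0^2\,\ud x_1\ud x_2 + \int_0^\infty u(x_2,x_2)^2 \psi_0(x_2)\psi_0'(x_2)\,\ud x_2.
\end{equation*}
For $\varphi$ supported in $\{x_2\geq R\}$, the diagonal boundary term is controlled by $\|\psi_0\psi_0'\|_{L^\infty(R,\infty)}\to 0$ times an $L^2$ trace of $u$, which is absorbed into the first term via a one-dimensional trace inequality; the contribution with $x_1$ large is handled by $v(x_1)\geq v_\infty-\epsilon>\varepsilon_0$.

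\textbf{Non-emptiness.} This is the crux of the statement. Following \cite{lupan}, I would choose as test function
\begin{equation*}
\varphi_L(x_1,x_2) = \psi_0(x_1)\,e^{-x_2/L}, \qquad (x_1,x_2)\in\Omega_0,\quad L\gg 1.
\end{equation*}
Writing $\iint_{\Omega_0}=\int_0^\infty\ud x_2\int_0^{x_2}\ud x_1$ via Fubini and using the eigenvalue equation for $\psi_0$ (the boundary term at $x_1=0$ vanishing by Neumann) reduces the problem to a one-dimensional integral in $x_2$. A further integration by parts in $x_2$ produces a boundary contribution at the corner $x_2=0$ of the wedge equal to $-\tfrac{1}{2}\psi_0(0)^2$, and one obtains
\begin{equation*}
Q_+[\varphi_L,\varphi_L]-\varepsilon_0\|\varphi_L\|^2 = -\tfrac{1}{2}\psi_0(0)^2 + O(L^{-1}).
\end{equation*}
Since $\|\varphi_L\|^2\sim L\|\psi_0\|_{L^2}^2/2\to\infty$, this shows $\inf\sigma(Q_+)\leq \varepsilon_0-c/L<\varepsilon_0$ for $L$ large, and combined with the previous step yields a genuine eigenvalue in the gap. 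The main obstacle I foresee is the limit-point case at $0$, where the pointwise value $\psi_0(0)$ is not \emph{a priori} meaningful and must be replaced by a properly defined boundary functional compatible with $\cD[h]$---this is where the appendix on singular 1D Schr\"odinger operators becomes essential.

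\textbf{Finiteness.} For the finiteness of the discrete spectrum I would follow the Feshbach-projection scheme of \cite{MT,KP}. Decompose every $\varphi\in\cD[Q_+]$ as
\begin{equation*}
\varphi(x_1,x_2)=\alpha(x_2)\psi_0(x_1)+\varphi^\perp(x_1,x_2),\qquad \alpha(x_2)=\frac{1}{N(x_2)}\int_0^{x_2}\psi_0(x_1)\varphi(x_1,x_2)\,\ud x_1,
\end{equation*}
with $N(x_2)=\int_0^{x_2}\psi_0^2$, so that $\varphi^\perp(\cdot,x_2)\perp\psi_0$ on $(0,x_2)$ for each $x_2$. The transverse component contributes at least $(\varepsilon_1-\varepsilon_0)\|\varphi^\perp\|^2$ via the spectral gap of $h$, while the longitudinal piece is driven by an effective one-dimensional Schr\"odinger operator on $\rz_+$ whose negative part decays exponentially (inherited from the fast decay of $\psi_0\psi_0'$); classical bounds of Bargmann/Birman--Schwinger type then allow only finitely many eigenvalues below $\varepsilon_0$. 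The critical auxiliary input is a quantitative exponential decay of $\psi_0$ and $\psi_0'$ under the general singular assumption~(A), which is where the appendix is decisive.
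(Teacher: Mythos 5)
Your overall plan is reasonable, but the argument you identify as ``the crux''---non-emptiness of the discrete spectrum---has a genuine gap that your proposed test function does not resolve.

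\textbf{Non-emptiness.} The test function $\varphi_L(x_1,x_2)=\psi_0(x_1)e^{-x_2/L}$, borrowed from the Lu--Pan circle of ideas, produces the lower-order term $-\tfrac12\psi_0(0)^2$ after the two integrations by parts, and the argument succeeds precisely when $\psi_0(0)>0$. You flag the limit-point case yourself, but what you do not address is that under the paper's assumptions (A)--(C) one may well have $\psi_0(0)=0$: for instance, if $v(x)\sim c/x^2$ with $c$ large near the origin, the form domain forces the ground state to vanish at $0$, and the appendix (Proposition~\ref{Randterme}) only guarantees $(\psi_0\psi_0')(0)=0$, not $\psi_0(0)\neq 0$. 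In that regime your estimate reads $Q_+[\varphi_L,\varphi_L]-\varepsilon_0\|\varphi_L\|^2 = O(1/L)\ge 0$, and the method yields nothing. The paper avoids this entirely by replacing $e^{-x_2/L}$ with $\phi_n(x_2)=F(x_2)^\rho\chi(x_2/n)$, where $F(x_2)=\int_0^{x_2}\psi_0^2$ and $\rho\in(\tfrac12,1)$; the negative contribution there is $\rho(\rho-1)\int_{\rz_+}F^{2\rho-1}(F')^2\,\ud x_2<0$, which is strictly negative regardless of the boundary behaviour of $\psi_0$. Replacing the ``boundary functional $\psi_0(0)^2$'' by ``a properly defined one compatible with $\cD[h]$'' is not enough: the correct substitute is a structurally different test function, not a reinterpretation of the same quantity.

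\textbf{Essential spectrum and finiteness.} For the lower bound on $\sigma_{\mathrm{ess}}(Q_+)$, the ground-state substitution $\varphi=u\psi_0$ plus Persson is a legitimate alternative to the paper's Neumann bracketing, but your claim that the diagonal term $\int u(x_2,x_2)^2\,\psi_0\psi_0'\,\ud x_2$ can be ``absorbed into the first term via a one-dimensional trace inequality'' needs care: $u=\varphi/\psi_0$ and the available bounds are in the weighted norms $\|u\|^2_{L^2(\psi_0^2)}$ and $\iint|\nabla u|^2\psi_0^2$, so the unweighted trace is not directly controlled, especially where $\psi_0$ is small. The paper's bracketing by vertical and horizontal lines sidesteps this. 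For finiteness, your $x_2$-dependent projection $\alpha(x_2)=N(x_2)^{-1}\int_0^{x_2}\psi_0\varphi$ is genuinely different from the paper's scheme (which first cuts along $x_2-x_1=\mathrm{const}$ with an IMS partition, extends the far piece to $\rz_+\times\rz$, and only then applies the Feshbach projection onto $\psi_0\otimes L^2(\rz)$). Your version faces two issues you do not address: the cross terms coming from $\partial_{x_2}$ acting on a moving projection (through $N(x_2)$ and through the truncated interval $(0,x_2)$), and the fact that the spectral gap of $h$ restricted to $(0,x_2)$ is $x_2$-dependent and not uniformly bounded below. The paper's route, where the projection is onto the fixed subspace $\psi_0\otimes L^2(\rz)$ and commutes with $-\partial_{x_2}^2$, avoids both.
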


{
\begin{remark}\label{kp1} It is clear that the above operators $Q_\pm$ correspond  to the restrictions
of the initial operator $H$  to the symmetric/anti-symmetric
functions, i.e. $\varphi(x_1,x_2)=\pm\varphi(x_1,x_2)$.
While the operator $Q_-$ is ``dominated'' by the operator $Q_+$ (in the sense
that the qualitative spectral picture for $H$ is determined by that of $Q_+$ only), it can be
studied on its own, and the analog of Proposition~\ref{prop1} has then
the following form:
\begin{prop}\label{prop23}
Let $h_0$ be the operator in $L^2(\rz_+)$ with
	\begin{equation}
	 \label{eqh0}
	\begin{split}
	h_0[\varphi,\varphi]&=\int_{\mathbb{R}_+} \big((\varphi'(x)^2+v(x)\varphi(x)^2\big)\, \ud x\,, \\
		\cD[h_0]&=\Big\{\varphi\in H^1_0(\rz_+): \int_{\rz_+} v(x)\varphi(x)^2\, \ud x<+\infty\Big\}. 
		\end{split}
	\end{equation}
	If the bottom of the spectrum $\inf \sigma(h_0)=:\varepsilon_*$ is an isolated eigenvalue
	with $\varepsilon_* < \liminf_{x \rightarrow \infty} v(x):=v_{\infty}$,
then the essential spectrum of $Q_-$ is $[\varepsilon_*,+\infty)$ and the discrete spectrum is
non-empty and finite.
\end{prop}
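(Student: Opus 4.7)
The plan is to follow the same architecture as in the proof of Proposition~\ref{prop1}, replacing the one-particle operator $h$ by its Dirichlet analog $h_0$ and the threshold $\varepsilon_0$ by $\varepsilon_*$. Three blocks must be established: the characterization of the essential spectrum, the existence of an eigenvalue below $\varepsilon_*$, and the finiteness of the discrete spectrum.

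For the essential spectrum, the inclusion $\sigma_\text{ess}(Q_-)\subset[\varepsilon_*,+\infty)$ would follow from a Persson-type argument: any Weyl sequence must concentrate at $x_2\to\infty$, and on this channel the transverse quadratic form in $x_1$ on the interval $(0,x_2)$ (with Dirichlet condition at $x_1=0$ coming from $H^1_0$) is bounded below by $\inf\sigma(h_0)=\varepsilon_*$ up to an error that vanishes as $x_2\to\infty$. For the reverse inclusion I would build explicit Weyl sequences of product form
\begin{equation*}
\Psi_n(x_1,x_2)=\psi_*(x_1)\,\chi_n(x_2)\cos(kx_2), \qquad k^2=\lambda-\varepsilon_*,
\end{equation*}
where $\psi_*$ is the normalised ground state of $h_0$ (which vanishes at $x_1=0$ and decays exponentially in $x_1$ because $\varepsilon_*<v_\infty$) and $\chi_n$ is a smooth cutoff translated to $+\infty$ in $x_2$, modified slightly if necessary so that $\Psi_n\in\cD[Q_-]$. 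The exponential decay of $\psi_*$ ensures that any such modification produces only lower-order corrections.

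For non-emptiness of the discrete spectrum I would adapt the test-function construction of Proposition~\ref{prop1} (itself based on~\cite{lupan}), replacing the ground state of $h$ by $\psi_*$. The candidate has the form $\Psi(x_1,x_2)=\psi_*(x_1)\chi(x_2)+R(x_1,x_2)$ with a slowly decaying profile $\chi$ on $\rz_+$ and a correction $R$ engineered to produce a negative cross-term in $Q_-[\Psi,\Psi]-\varepsilon_*\|\Psi\|^2$ from the corner geometry of $\Omega_0$; crucially, both $\psi_*$ and $R$ can be chosen to vanish at $x_1=0$, so the Dirichlet condition from $H^1_0$ is preserved. For finiteness, a Feshbach decomposition with respect to the rank-one projection onto $\psi_*$ in the $x_1$-variable reduces the counting of eigenvalues below $\varepsilon_*$ to a one-dimensional effective operator in $x_2$, to which a Birman-Schwinger-type bound applies, exactly as in~\cite{MT,KP}.

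The main obstacle, as in Proposition~\ref{prop1}, lies in the variational step: one must re-derive the regularity and decay properties of $\psi_*$ (the analogous statements of the Appendix are written for the Neumann ground state of $h$) in the Dirichlet setting, and verify that the boundary integrals which appear in the computation of $Q_-[\Psi,\Psi]-\varepsilon_*\|\Psi\|^2$ still acquire the correct sign despite the change of boundary condition at $x_1=0$. Once these one-dimensional building blocks are in place, the passage from $Q_+$ to $Q_-$ is essentially cosmetic.
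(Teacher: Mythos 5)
Your overall plan coincides with the paper's: Proposition~\ref{prop23} is proved by literally repeating the three blocks of the proof of Proposition~\ref{prop1}, with $h$ replaced by $h_0$, $\varepsilon_0$ by $\varepsilon_*$, and the Neumann ground state $\psi_0$ by the Dirichlet ground state $\psi_*$; Remark~\ref{lastrem} in the Appendix records that all the one-dimensional ingredients (the regularity statements, the integration-by-parts formula of Proposition~\ref{Randterme}, simplicity and positivity of the ground state, and the Agmon decay of Corollary~\ref{cora7}) transfer to $h_0$ without change, which you correctly single out as the point that needs verification. Your observation that the Dirichlet condition at $x_1=0$ is preserved by the constructions is also exactly right.

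One misstatement in your sketch: the test function used for the existence of discrete spectrum is \emph{not} an additive perturbation $\psi_*\chi+R$. The paper uses the pure product
\[
\varphi_n(x_1,x_2)=\psi_*(x_1)\,F(x_2)^\rho\,\chi\!\left(\tfrac{x_2}{n}\right),\qquad F(x_2)=\int_0^{x_2}\psi_*(t)^2\,\ud t,\qquad \rho\in\left(\tfrac12,1\right),
\]
with no correction term. The negative deficit comes from the geometry alone: integrating by parts in $x_1$ over the triangular cross-section $(0,x_2)$ (licensed because $\psi_*\psi_*'\to 0$ at the origin by the $h_0$-version of Proposition~\ref{Randterme}, using $\psi_*(0)=0$) leaves the boundary term $\psi_*(x_2)\psi_*'(x_2)$; a second integration by parts in $x_2$ together with $F'=\psi_*^2$ reduces $Q_-[\varphi_n,\varphi_n]-\varepsilon_*\|\varphi_n\|^2$ to $\rho(\rho-1)\int_{\rz_+}F^{2\rho-1}(F')^2\,\ud x_2 + o(1)$ as $n\to\infty$, which is strictly negative for $\rho<1$. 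With this corrected, the rest of your outline — explicit Weyl sequences for $[\varepsilon_*,\infty)\subset\sigma_\text{ess}(Q_-)$, bracketing for the reverse inclusion, and a Feshbach projection onto $\psi_*\otimes L^2(\rz)$ followed by a Bargmann bound for finiteness — matches the paper's argument for $Q_+$.
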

This can be proved by a literal repetition of the proof of Proposition~\ref{prop1}
given in the following three subsections (see also Remark~\ref{lastrem} in Appendix concerning $h_0$). 
\end{remark}
}

%
\subsection{Essential spectrum}

Let us show the equality $\sigma_\text{ess}(Q_+)=[\varepsilon_0,+\infty)$
by establishing separately the inclusions in both directions.
The constructions of this section are very standard and are given to render a self-contained presentation. 

{In a first step, let us prove first that $\sigma_{\text{ess}}(Q_+)\subset [\varepsilon_0,\infty)$ employing an operator bracketing argument:} For that, we partition $\Omega$ into three subdomains $\Omega_j$, $j=1,2,3$, using the straight lines $x_1=L$ and $x_2=L$ with $L > 0$ large enough. More precisely,
\begin{align*}
\Omega_1&:=\big\{(x_1,x_2)\in\Omega_0:\, x_2<L\big\} && \text{ is the bounded triangle,}\\
\Omega_2&:=\big\{(x_1,x_2)\in\Omega_0:\,  x_1<L,\, x_2>L\big\} && \text{ is the half-infinite strip,}\\
\Omega_3&:=\big\{(x_1,x_2)\in\Omega_0:\, x_1>L\big\} &&\text{ is the remaining infinite sector.}
\end{align*}
Define self-adjoint operators $Q_j$ in $L^2(\Omega_j)$, $j\in\{1,2,3\}$,
through their bilinear forms
\begin{equation*}\begin{split}
Q_j[\varphi,\varphi]=&\iint_{\Omega_j} \big(|\nabla\varphi(x_1,x_2)|^2+v(x_1)\varphi(x_1,x_2)^2\big)\, \ud x_1\ud x_2\,,\\
\cD[Q_j]=&\big\{\varphi \in H^1(\Omega_j): \ Q_j[\varphi,\varphi] < \infty \big\}\,.
\end{split}
\end{equation*} 
Using the canonical orthogonal projections $P_j: L^2(\Omega_0)\rightarrow L^2(\Omega_j)$, defined just as restrictions to $\Omega_j$, we observe that $P_j \cD[Q_0]\subset \cD[Q_j]$
and that $Q_+[\varphi,\varphi]=\sum_{j=1}^3 Q_j[P_j\varphi,P_j\varphi]$ and, in addition, that the map
\[
I:L^2(\Omega_0)\ni \varphi \mapsto
(P_1\varphi, P_2\varphi, P_3\varphi)\in \bigoplus_{j=1}^3 L^2(\Omega_j)
\]
is unitary. It follows by the min-max principle that
\[
\inf\sigma_\text{ess}(Q_0)\ge \inf\sigma_\text{ess}(Q_1 \oplus Q_2 \oplus Q_3)
=\min_{j\in\{1,2,3\}} \inf\sigma_\text{ess} (Q_j)\,.
\]

Since $\Omega_1$ is a bounded Lipschitz domain, the form domain $\cD[Q_1]\subset H^1(\Omega_1)$
is compactly embedded in $L^2(\Omega_1)$, which implies that the spectrum of $Q_1$
is purely discrete. Furthermore, we have $\inf\sigma(Q_3)\ge \inf_{x_1> L} v(x_1)>\varepsilon_0$
for all $L\ge L_0$ with $L_0$ chosen sufficiently large, due to to the assumption (C) on the potential $v$.
It follows that
\begin{equation}
   \label{qqq}
\inf\sigma_\text{ess}(Q_0)\ge \min\big\{ \varepsilon_0,
\inf\sigma_\text{ess}(Q_2)\big\}
\quad \text { for $L\ge L_0$\,.}
\end{equation}

To analyze $Q_2$ we remark first that it admits a separation of variables,
\begin{equation}\label{SeparationVariables}
Q_2=h^{N}_{L} \otimes \eins + \eins \otimes q_2\, ,
\end{equation}
where $h^{N}_{L}$ is the operator in $L^2(0,L)$
associated with the form
\begin{equation*}\begin{split}
h^{N}_{L}[\varphi,\varphi]&=\int_0^L \big((\varphi'(x)^2 + v(x) \varphi(x)^2\big)\, \ud x\, , \\
\cD[h^N_L]&=\big\{\varphi\in H^1(0,L):\, \int_0^L v(x)\varphi(x)^2\, \ud x <\infty\big\}\,,
\end{split}
\end{equation*}
while $q_{2}$ acts in $L^2(L,+\infty)${, being defined via its associated form}
\[
q_2[\varphi,\varphi]=\int_L^\infty \varphi'(x)^2\ud x,
\quad \cD[q_2]=H^1(L,+\infty),
\]
i.e. $q_2$ { acts as} $\varphi\mapsto -\varphi''$ with the Neumann boundary condition at $L$,
and $\sigma(q_2)=\sigma_\text{ess}(q_2)=[0,+\infty)$.
By \eqref{SeparationVariables} there holds $\inf \sigma_{\text{ess}}(Q_2)=\inf \sigma(h^{N}_{L})$. It is standard to see (see Proposition~\ref{CorollaryConvergenceEnergy})
that $\lim_{L\to\infty}\inf\sigma(h^{N}_{L}) =\varepsilon_0$.
By \eqref{qqq} one has $\inf \sigma_{\text{ess}}(Q_+) \geq
\liminf_{L\to+\infty}
\min\big\{ \varepsilon_0,
\inf \sigma(q^{N}_{L})\big\}=\varepsilon_0$.

Now let us show the reverse inclusion $[\varepsilon_0,\infty)\subset\sigma_{\text{ess}}(Q_+)$ by constructing a suitable Weyl sequence.
For that, let $\tau:\mathbb{R} \rightarrow \mathbb{R}$ be a smooth function with $0 \leq \tau\leq 1$ such that
$\tau(x)=1$ for $x \geq 2$ and $\tau(x)=0$ for $x \leq 1$.
Pick any $k\in [0,\infty)$. For  $n\ge 2$ define
$\varphi_n(x_1,x_2)=f_n(x_1) g_n(x_2)$ with
\[
f_n(x_1)=\psi_0(x_1)\tau(n-x_1),\quad
g_n(x_2)=\cos{(kx_2)} \tau(x_2-n)\tau(2n-x_2)\, { .}
\]
{ Then $\varphi_n$} vanishes outside the rectangle $[0,n-1]\times [n+1,2n-1]\subset \Omega_0$, and $\varphi\in \cD(Q_+)$.
For large $n$ one estimates, with a suitable $a>0$,
\[
\|\varphi_n\|^2_{L^2(\Omega_0)}\ge \int_0^{n-2}f_n(x_1)^2{ \ud x_1}\, \int_{n+2}^{2n-2}\cos^2{(kx_2)} \ud x_2\ge an\,.
\]
On the other hand, ${(Q_+ \varphi)}(x_1,x_2)= F_n(x_1)g_n(x_1)+f_n(x_1)G_n$ with
\begin{align*}
F_n(x_1)&= -\psi''_0(x_1) \tau(n-x_1) +2\psi'_0(x_1)\tau'(n-x_1)\\
&\quad - { \psi_0(x_1)}\tau''(n-x_1) +v(x_1)\psi_0(x_1)\tau(n-x_1)\\
&=\varepsilon_0 f_n(x_1) +\Phi_n(x_1)\,,\\
G_n(x_2)&=k^2\cos{(kx_2)} \tau(x_2-n)\tau(2n-x_2)\\
&\quad +2k \sin (kx_2)\big[\tau'(x_2-n)\tau(2n-x_2)-\tau(x_2-n)\tau'(2n-x_2)\big]\\
&\quad -\cos{(kx_2)} \Big[\tau''(x_2-n)\tau(2n-x_2)+2\tau'(x_2-n)\tau'(2n-x_2)\\
&\qquad+\tau(x_2-n)\tau''(2n-x_2)\Big]\\
&=k^2g_n(x_2)+\Psi_n(x_2)\,,
\end{align*}
where with some $b>0$ one has 
\begin{gather*}
\text{$|\Phi_n|\le b \big(|\psi_0|+|\psi'_0|\big)$ with $\supp \Phi_n\subset [n-2,n-1]$}\,,\\
\text{$\|\Psi_n\|_\infty\le b$ with ${ \supp \Psi_n}\subset [n+1,n+2]\,\cup\,[2n-2,2n-1]$}\,.
\end{gather*}
One has $\big(Q_+ -(\varepsilon_0+k^2)\big)\varphi(x_1,x_2)=\Phi_n(x_1)g_n(x_2)+f_n(x_1)\Psi_n(x_2)$ and
\begin{multline*}
\Big\|\big(Q_+ -(\varepsilon_0+k^2)\big)\varphi\Big\|^2_{L^2(\Omega_0)}\le
2 \int_{n-2}^{n-1} \Phi_n(x_1)^2\ud x_1 \int_{n+1}^{2n-1} g_n(x_2)^2\ud x_2\\
+2 \int_{0}^{n-1} f_n(x_1)^2\ud x_1 \bigg(\int_{n+1}^{n+2} + \int_{2n-2}^{2n-1} \Psi_n(x_2)^2\ud x_2\\
\le 4b^2 \int_{n-2}^{n-1} \big(\psi_0^2 + (\psi'_0)^2\big)\ud x_1 \int_{n+1}^{2n-1} \cos (kx_2)^2\ud x_2
+4b^2 \int_{0}^{n-1} \psi_0(x_1)^2\ud x_1\\
{ {\le 4b^2(n-2) \int_{n-2}^{n-1} \big(\psi_0^2 + (\psi'_0)^2\big)\ud x_1}+4b^2.}
\end{multline*}
Therefore, 
\begin{multline*}
\dfrac{\big\|\big(Q_+ -(\varepsilon+k^2)\big)\varphi\big\|^2_{L^2(\Omega_0)}}{\|\varphi_n\|^2_{L^2(\Omega_0)}}\le
\dfrac{{4b^2(n-2)} \displaystyle\int_{n-2}^{n-1} \big(\psi_0^2 + (\psi'_0)^2\big)\ud x_1+4b^2}{an}\\
={\dfrac{4b^2}{a}\Big( 1-\dfrac{2}{n}\Big)}\int_{n-2}^{n-1} \big(\psi_0^2 + (\psi'_0)^2\big)\ud x_1 + \dfrac{4b^2}{a}\,\dfrac{1}{n}\xrightarrow{n\to\infty} 0
\end{multline*}
due to $\psi_0\in H^1(\rz_+)$. Hence, $\varepsilon_0+k^2\in \sigma(Q_+)$ for any $k\ge 0$, in other words, $[\varepsilon_0,\infty)\subset\sigma(Q_+)$. As the set $[\varepsilon_0,\infty)$ has no isolated points, it follows that $[\varepsilon_0,\infty)\subset\sigma_\text{ess}(Q_+)$.
\subsection{Existence of discrete eigenvalues}
In this section we show that the discrete spectrum of $Q_+$ is non-empty.

Recall that the bilinear form of $Q_+$ is given by
\begin{align*}
Q_+[\varphi,\varphi]&=\iint_{\Omega_0} \Big(|\nabla\varphi(x_1,x_2)|^2+v(x_1)\varphi(x_1,x_2)^2\Big)\, \ud x_1\ud x_2\, ,\\
\cD(Q_+)&=\big\{\varphi\in H^1(\Omega_0):\, Q_+[\varphi,\varphi]<\infty\big\}\, .
\end{align*}
As $\inf\sigma_\text{ess}(Q_+)=\varepsilon_0$, it follows by the min-max principle
that the non-emptyness of the discrete spectrum follows
from the existence of a function $\varphi \in \cD[Q_+]$ satisfying the strict inequality $Q_+[\varphi,\varphi] -\varepsilon_0\|\varphi\|^2_{L^2(\Omega_0)}<0$.

We will seek for such a function $\varphi$ in the form $\varphi(x_1,x_2)=\psi_0(x_1)\phi(x_2)\, $,
with $\psi_0$ being as previously the ground state of $h$ and $\phi$ a function to be specified. Due to the standard regularity considerations (see Appendix) there holds
$\psi_0\in C^1(\rz_+)\cap L^\infty(\rz_+)$. With some $\rho>0$ we introduce
\begin{equation}
F(x_2):=\int_{0}^{x_2}\psi_0(x)^2\, \ud x\,, \qquad
\phi(x_2):=F(x_2)^{\rho}.
\end{equation}
It is easily checked (see Proposition~\ref{RegularityFunction}) that $\phi\in H^1(0,a)$ for any $a>0$
provided $\rho>\frac{1}{2}$, which is assumed from now on.
Finally we introduce a smooth cut-off function $\chi$ and the associated truncations $\phi_n$,
$n\in\nz$,
by
\begin{equation*}
0\le \chi\le 1, \quad
\chi(t)=\begin{cases}
0\,,& \quad t \geq 2\,,\\
1\,,& \quad t \leq 1\,,
\end{cases}
\qquad
\phi_n(x_2)=\phi(x_2) \chi\left(\frac{x_2}{n}\right).
\end{equation*}
The function $\varphi$ defined by $\varphi_n(x_1,x_2):=\psi_0(x_1)\phi_n(x_2)$
belongs then to $\cD[Q_+]$ for any $n\in\nz$.
A calculation then yields the following:
\begin{equation*}\begin{split}
Q_+[\varphi_n,\varphi_n]&=\iint_{\Omega_0}\left(|\nabla \varphi_n|^2+v(x_1)|\varphi_n|^2\right) \, \ud x_1\ud x_2 \\
&=\iint_{\Omega_0}\left[\left(\psi_0^{\prime}(x_1)\phi_n(x_2)\right)^2+v(x_1)\psi_0^2(x_1)\phi^2_n(x_2)\right]\, \ud x_1 \ud x_2\\
&\qquad +\iint_{\Omega_0}\left(\psi_0(x_1)\phi_n^{\prime}(x_2)\right)^2 \, \ud x_1 \ud x_2 \\
&=\int_{\mathbb{R}_+}\left(\int_{0}^{x_2} \psi_0^{\prime}(x_1)\psi_0^{\prime}(x_1)+v(x_1)\psi_0(x_1)^2 \, \ud x_1\right)\phi^2_n(x_2)\, \ud x_2 \\
&\qquad+\iint_{\Omega_0}\left(\psi_0(x_1)\phi_n^{\prime}(x_2)\right)^2 \, \ud x_1 \ud x_2\, .
\end{split}
\end{equation*}
An integration by parts ({ which is still possible for singular potentials $v$, see Proposition~\ref{Randterme} in the appendix}) gives
\begin{multline*}
\int_{0}^{x_2} \psi_0^\prime(x_1)\psi_0^{\prime}(x_1)+v(x_1)\psi_0(x_1)^2 \, \ud x_1\\
=\int_{0}^{x_2} \psi_0(x_1)\Big(-\psi_0''(x_1)+v(x_1)\psi_0(x_1)\Big)\ud x_1
+\psi_0(x_2)\psi_0'(x_2),\\
=\varepsilon_0 \int_{0}^{x_2} \psi_0(x_1)^2\, \ud x_1+\psi_0(x_2)\psi_0'(x_2)
\end{multline*}
and { which allows us to write}
\begin{equation}
\label{qpl}\begin{split}
Q_+[\varphi_n,\varphi_n]&=\varepsilon_0 \|\varphi_n\|^2_{L^2(\Omega_0)} +\int_{\mathbb{R}_+}\psi_0(x_2)\psi_0^{\prime}(x_2)\phi^2_n(x_2) \, \ud x_2\\
&\quad +\iint_{\Omega_0}\psi_0(x_1)^2\phi_n^{\prime}(x_2)^2 \, \ud x_1 \ud x_2\, .
\end{split}
\end{equation}
{ Integrating the middle term on the right-hand side by parts one obtains}
\begin{multline*}
\int_{\mathbb{R}_+}\psi_0(x_2)\psi_0^{\prime}(x_2)\phi^2_n(x_2) \, \ud x_2\\
= \dfrac{(\psi_0^2 \phi_n^2)^2(\infty) -(\psi_0^2 \phi_n^2)^2(0)}{2}
-
\int_{\mathbb{R}_+}\psi_0(x_2)^2\phi_n(x_2)\,\phi'_n(x_2) \, \ud x_2\,.
\end{multline*}
One has $\phi_n(0)=\phi_n(\infty)=0$ and $\psi_0\in L^\infty(\rz_+)$, which shows that the first summand on the right-hand side vanishes, and
\[
\int_{\mathbb{R}_+}\psi_0(x_2)\psi_0^{\prime}(x_2)\phi^2_n(x_2) \, \ud x_2\\
= -
\int_{\mathbb{R}_+}\psi_0(x_2)^2\phi_n(x_2)\,\phi'_n(x_2) \, \ud x_2\,.
\]
{ Taking $F'=\psi_0^2$ into account} one rewrites \eqref{qpl} as
\begin{multline*}
Q_+[\varphi_n,\varphi_n]-\varepsilon_0 \|\varphi_n\|^2_{L^2(\Omega_0)}\\
=-\int_{\mathbb{R}_+}\psi^2_0(x_2)\phi_n^{\prime}(x_2)\phi_n(x_2) \, \ud x_2+\int_{\mathbb{R}_+}F(x_2)\phi_n^{\prime}(x_2)^2 \, \ud x_2\\
=\int_{\mathbb{R}_+}\Big(F(x_2)\phi_n^{\prime}(x_2)^2-\phi_n'(x_2)F^{\prime}(x_2)\phi_n(x_2) \Big)\, \ud x_2=:G_n\,.
\end{multline*}
In order to show that the term $G_n$ can be made strictly negative
one uses first the expressions for $F$ and $\phi_n$ to compute
\begin{multline*}
F(x_2)\phi_n^{\prime}(x_2)^2=F(x_2) \bigg(
\rho F(x_2)^{\rho-1}F'(x_2)\chi\Big(\dfrac{x_2}{n}\Big) + \dfrac{1}{n}\, F(x_2)^\rho
\chi'\Big(\dfrac{x_2}{n}\Big)
\bigg)^2\\
=\rho^2 F(x_2)^{2\rho-1}F'(x_2)^2\chi\Big(\dfrac{x_2}{n}\Big)^2
+\dfrac{2\rho}{n}\,F(x_2)^{2\rho}F'(x_2)\chi\Big(\dfrac{x_2}{n}\Big)\chi'\Big(\dfrac{x_2}{n}\Big)\\
+\dfrac{1}{n^2}\, F(x_2)^{2\rho+1}\chi'\Big(\dfrac{x_2}{n}\Big)^2\,,
\end{multline*}
%
and
\begin{multline*}
{\phi_n^{\prime}}(x_2)F^{\prime}(x_2)\phi_n(x_2)\\
=
\bigg(
\rho F(x_2)^{\rho-1}F'(x_2)\chi\Big(\dfrac{x_2}{n}\Big) + \dfrac{1}{n}\, F(x_2)^\rho
\chi'\Big(\dfrac{x_2}{n}\Big)
\bigg) F^{\prime}(x_2) F(x_2)^\rho \chi\Big(\dfrac{x_2}{n}\Big)\\
=\rho F(x_2)^{2\rho-1}F'(x_2)^2\chi\Big(\dfrac{x_2}{n}\Big)^2
+\dfrac{1}{n}\,F(x_2)^{2\rho}F'(x_2)\chi\Big(\dfrac{x_2}{n}\Big)\chi'\Big(\dfrac{x_2}{n}\Big)\,,
\end{multline*}
which yields, for $g_n(x_2):=F(x_2)\phi_n^{\prime}(x_2)^2-{\phi_n^{\prime}}(x_2)F^{\prime}(x_2)\phi_n(x_2)$,
\begin{align*}
g_n(x_2)&=\rho(\rho-1) F(x_2)^{2\rho-1}F'(x_2)^2\chi\Big(\dfrac{x_2}{n}\Big)^2\\
&\quad+\dfrac{2\rho-1}{n}F(x_2)^{2\rho}F'(x_2)\chi\Big(\dfrac{x_2}{n}\Big)\chi'\Big(\dfrac{x_2}{n}\Big)\\
&\quad+\dfrac{1}{n^2}\, F(x_2)^{2\rho+1}\chi'\Big(\dfrac{x_2}{n}\Big)^2\,.
\end{align*}
One then decompse the above term $G_n$ as follows:
\begin{align}
  \label{egn}
G_n&=\int_{\rz_+}g_n(x_2)\ud x_2=\rho(\rho-1)\big(A +B_n\big) +\dfrac{2\rho-1}{n}\,C_n + \dfrac{1}{n^2}\,D_n\,,\\
A&:=\int_{\rz_+}F(x_2)^{2\rho-1}F'(x_2)^2\ud x_2\,,\nonumber\\
B_n&:=\int_{\rz_+}F(x_2)^{2\rho-1}F'(x_2)^2
\bigg(\chi\Big(\dfrac{x_2}{n}\Big)^2-1\bigg)\ud x_2\,,\nonumber\\
C_n&:=\int_{\rz_+}F(x_2)^{2\rho}F'(x_2)\chi\Big(\dfrac{x_2}{n}\Big)\chi'\Big(\dfrac{x_2}{n}\Big)\ud x_2\,,\nonumber\\
D_n&:=\int_{\rz_+}
F(x_2)^{2\rho+1}\chi'\Big(\dfrac{x_2}{n}\Big)^2
\ud x_2\,.\nonumber
\end{align}
We recall that $0\le  F\le 1$ and $F'=\psi_0^2\in L^1(\rz_+)\cap L^\infty(\rz_+)$, which
ensures the finiteness of the integrals. One easily sees that $A>0$,
while $\lim_{n\to+\infty} B_n=0$. We then estimate
\begin{align*}
|C_n|&\le \|\chi'\|_\infty
\int_{\rz_+}F(x_2)^{2\rho}F'(x_2)\ud x_2\\
&=\|\chi'\|_\infty \dfrac{F(\infty)^{2\rho-1}-F(0)^{2\rho-1}}{2\rho+1}= \dfrac{\|\chi'\|_\infty}{2\rho+1}\,,\\
|D_n|&\le \|\chi'\|_\infty^2 \int_{n}^{2n}F(x_2)^{2\rho+1}\ud x_2\le n\|\chi'\|_\infty^2\,,
\end{align*}
and using \eqref{egn} one has $\lim_{n\to+\infty} G_n=\rho(\rho-1)A$.
Hence choosing any value $\rho\in \big(\frac{1}{2}, 1\big)$ we have $G_n<0$ for large $n$, which
concludes the proof.

%
%
%
%
%
%
\subsection{Finiteness of the discrete spectrum}
In this section we prove that $Q_+$ has only finitely many eigenvalues in $(-\infty,\varepsilon_0)$.

We first introduce a pair of smooth functions $\chi_1,\chi_2:\mathbb{R} \rightarrow [0,\infty)$ such that $\chi_1(t)=1$ for $t \leq 1$, $\chi_2(t)=1$ for $t \geq 2$, and $\chi^2_1+\chi^2_2=1$.
We set, for $R >0$ and $j=1,2$, 
\[
\chi^{R}_j(x_1,x_2):=\chi_j\left(\frac{x_2-x_1}{R}\right)\, .
\]
Then, for any $\varphi \in \cD[Q_+]$ we have $\chi^{R}_j\varphi \in \cD[Q_+]$ and, by direct computation
\begin{align}
\label{EquationUSEFULProof}
Q_+[\varphi,\varphi]&=Q_+[\chi^{R}_1\varphi,\chi^{R}_1\varphi]
+Q_+[\chi^{R}_2\varphi,\chi^{R}_2\varphi]
-\iint_{\Omega_0}W_{R} \,\varphi^2\, \ud x_1 \ud x_2\,,\\
 \nonumber
W_{R}(x_1,x_2)&:=|\nabla \chi^{R}_1|^2 + |\nabla \chi^{R}_2|^2 =\frac{2}{R^2}\left[\chi^{\prime}_1\left(\frac{x_1-x_2}{R}\right)^2+\chi^{\prime}_2\left(\frac{x_2-x_1}{R}\right)^2 \right].
\end{align}
Consider two following (overlapping) subdomains of $\Omega_0$:
\begin{align*}
\Omega_1&:=\big\{(x_1,x_2)\in \Omega_0:\ x_2 < x_1+2R  \big\}\,,\\
\Omega_2&:=\big\{(x_1,x_2)\in \Omega_0:\ x_2 > x_1+R  \big\}\,,
\end{align*}
and define self-adjoint operators $Q_j$ in $L^2(\Omega_j)$, $j\in\{1,2\}$,
 by their forms
\begin{align*}
Q_1[\varphi,\varphi]&=\iint_{\Omega_1} \Big( \big| \nabla \varphi(x_1,x_2)\big|^2
+\big(v(x_1) -W_R(x_1,x_2)\big)\varphi(x_1,x_2)^2\Big)\ud x_1\ud x_2\,,\\
\cD[Q_1]&=\big\{\varphi \in H^1(\Omega_1):\, Q_1[\varphi,\varphi] < \infty\,,
\varphi=0 \text{ on the line } x_2=x_1+2R \big\},\\
Q_2[\varphi,\varphi]&=\iint_{\Omega_1} \Big( \big| \nabla \varphi(x_1,x_2)\big|^2
+\big(v(x_1) -W_R(x_1,x_2)\big)\varphi(x_1,x_2)^2\Big){ \ud x_1\ud x_2}\,,\\
\cD[Q_2]&=\big\{\varphi \in H^1(\Omega_2):\, Q_2[\varphi,\varphi] < \infty, \varphi=0
\text{ on the line } x_2=x_1+R\big\}\,.
\end{align*}

Let us return back to \eqref{EquationUSEFULProof}. The functions
$\chi^R_j \varphi$ vanish outside $\Omega_j$, $j\in\{1,2\}$, and
their restrictions to $\Omega_j$ belong to $\cD[Q_j]$.
In addition, one has $|\chi^R_1\varphi|^2+|\chi^R_1\varphi|^2= \varphi^2$ pointwise.
This allows one to rewrite \eqref{EquationUSEFULProof} as
\begin{equation}
   \label{eqq0}
Q_+[\varphi,\varphi]=Q_1[\chi^{R}_1\varphi,\chi^{R}_1\varphi]
+Q_2[\chi^{R}_2\varphi,\chi^{R}_2\varphi]\,.
\end{equation}
Consider an auxiliary operator $\widehat{Q}=Q_1\oplus Q_2$ defined on $L^2(\Omega_1)\oplus L^2(\Omega_2)$, then $\cD[\widehat Q]=\cD[Q_1]\times \cD[Q_2]$, with
\[
\widehat Q\big[(\varphi_1,\varphi_2),(\varphi_1,\varphi_2)\big]=Q_1[\varphi_1,\varphi_1]
+Q_2[\varphi_2,\varphi_2]\,.
\]
The linear map
\[
J:L^2(\Omega_0)\ni \varphi \mapsto (\chi^R_1\varphi,\chi_2^R\varphi)\in L^2(\Omega_1)\oplus L^2(\Omega_2)\,,
\]
is isometric and, hence, injective, with $J \cD[Q_+]\subset \cD[\widehat Q]$, and
Eq.~\eqref{eqq0} reads then as $Q_+[\varphi,\varphi]=\widehat Q[J\varphi,J\varphi]$. Hence, if one denotes be $E_n(L)$ the $n$th
eigenvalue of a self-adjoint operator $L$, then the min-max principle gives, for any $n\in\nz$,

\begin{align*}
%
E_n(Q_+)&=\inf\limits_{V_n\subset\cD[Q_+]}\sup\limits_{0\ne\varphi\in V_n}\frac{{Q}_+[\varphi,\varphi]}{\|\varphi\|^2_{L^2(\Omega_0)}}
=\inf\limits_{V_n\subset\cD[Q_+]}\sup\limits_{0\ne\varphi\in V_n}\frac{\widehat{Q}[J\varphi,J\varphi]}{\|J\varphi\|^2_{L^2(\Omega_1)\uplus L^2(\Omega_2)}}\\
&=\inf\limits_{U_n\subset J \cD[Q_+]}
\sup\limits_{0\ne\psi \in U_n}\frac{\widehat{Q}[\psi,\psi]}{\|\psi\|^2_{L^2(\Omega_1)\oplus L^2(\Omega_2)}}\\
&\ge
\inf\limits_{U_n\subset \cD[\widehat Q]}
\sup\limits_{0\ne\psi \in U_n}\frac{\widehat{Q}[\psi,\psi]}{\|\psi\|^2_{L^2(\Omega_1)\oplus L^2(\Omega_2)}}= { E_n(\widehat Q)\,,}
\end{align*}
%
where $V_n$ and $U_n$ stand for $n$-dimensional subspaces.
Hence, if for a self-adjoint operator $L$ and $\lambda\in\rz$ we denote
by $N(L,\lambda)$ the number of eigenvalues of $L$ in $(-\infty,\lambda)$,
then it follows from the above constructions that
\[
N(Q_+,\varepsilon_0)\le N(\widehat Q,\varepsilon_0)=N(Q_1,\varepsilon_0)+N(Q_2,\varepsilon_0).
\]
Hence, it is sufficient to show that $N(Q_j,\varepsilon_0)$ are finite for $j\in\{1,2\}$.
%
%
%
%

{ Let us start with $N(Q_1,\varepsilon_0)$:}
Consider the decomposition of $\Omega_1$ created by the line $x_1=L$, i.e.
\begin{align*}
\Omega_{1,\text{int}}&:=\big\{(x_1,x_2)\in \Omega_0:\ x_2 < x_1+2R\ \text{and}\ x_1 < R\big\}\,,\\
\Omega_{1,\text{ext}}&:=\big\{(x_1,x_2)\in \Omega_0:\ x_2 < x_1+2R\ \text{and}\ x_1 > R \big\}\,,
\end{align*}
and consider the operators $Q_{1,\bullet}$ in $L^2(Q_{1,\bullet})$ with
$\bullet\in\{\text{int},\text{ext}\}$, given by their forms
\begin{equation*}\begin{split}
&Q_{1,\bullet}[\varphi,\varphi]=\iint_{\Omega_{1,\bullet}}  \big| \nabla \varphi(x_1,x_2)\big|^2 \ud x_1\ud x_2\\
&\qquad \qquad \qquad+\iint_{\Omega_{1,\bullet}} \big(v(x_1) -W_R(x_1,x_2)\big)\varphi(x_1,x_2)^2\ud x_1\ud x_2\,,\\
&\cD[Q_{1,\bullet}]=\big\{\varphi \in H^1(\Omega_{1,\bullet}): \ Q_{1,\bullet}[\varphi,\varphi] < \infty \big\}\,.
\end{split}
\end{equation*}
The bilinear form for $Q_{1,\text{int}}\oplus Q_{1,\text{ext}}$ is an extension of the bilinear form for $Q_1$, and the min-max principle shows that the eigenvalues of $Q_1$
can not be lower than the respective eigenvalues of $Q_{1,\text{int}}\oplus Q_{1,\text{ext}}$.
In terms of the counting functions this leads to
\[
N(Q_1,\varepsilon_0)\le N(Q_{1,\text{int}}\oplus Q_{1,\text{ext}},\varepsilon_0)
=N(Q_{1,\text{int}},\varepsilon_0)+N(Q_{1,\text{ext}},\varepsilon_0)\,.
\]
The domain $\Omega_{1,\text{int}}$ is bounded, Lipschitz and $\cD[Q_{1,\text{int}}]\subset H^1(\Omega_{1,\text{int}})$ is compactly embedded into $L^2(\Omega_{1,\bullet})$, which implies that $Q_{1,\text{int}}$
is with compact resolvent, and then $N(Q_{1,\text{int}},\varepsilon_0)<\infty$ for any fixed $R>0$.
{ On the other hand,} the upper bound $\|W_R\|\le c/R^2$ with some $c>0$ and the assumption (C)
on the potential $v$ imply that for sufficiently large $R$ one has $v(x_1)-W_R(x_1,x_2)\ge\varepsilon_0$ for all $(x_1,x_2)\in \Omega_{1,\text{ext}}$. It follows that
$Q_{1,\text{ext}}$ has no spectrum below $\varepsilon_0$ and  $N(Q_{1,\text{ext}},\varepsilon_0)$. Therefore, there exists $R_0>0$ such that $N(Q_1,\varepsilon_0)<\infty$ for any $R>R_0$.

In order to conclude it remains to show that $N(Q_2,\varepsilon_0)<\infty$ for large $R>0${; note that $\Omega_2$ depends on $R$.}
Due to the fact that the functions in the form domain of $Q_2$ vanish at the line $x_2=x_1+R$
they can be extended by zero to functions in $H^1(\rz_+\times\rz)$.
Therefore, if one considers the operator $\widehat Q_2$ in $L^2(\rz_+\times\rz)$
given by
\begin{equation*}\begin{split}
&\widehat{Q}_2[\varphi,\varphi]:=\iint_{\rz_+ \times \rz}\,
\big| \nabla \varphi(x_1,x_2)\big|^2 \ud x_1\ud x_2 \\
& \qquad \qquad \qquad +\iint_{\rz_+ \times \rz}\big(v(x_1) -W_R(x_1,x_2)\big)\varphi(x_1,x_2)^2\ud x_1\ud x_2\,,\\
&\cD[\widehat{Q}_2]=\big\{\varphi \in H^1(\rz_+ \times \rz): \ \widehat{Q}_2[\varphi,\varphi] < \infty \big\}\,,
\end{split}
\end{equation*}
then it follows by the min-max principle that $N(Q_2,\varepsilon_0)\le N(\widehat Q_2,\varepsilon_0)$.
Therefore, it is sufficient to show that $N(\widehat Q_2,\varepsilon_0)<\infty$ for large $R$.

The subsequent construction is inspired by the representation
\[
\widehat Q_2=(h\otimes \eins + \eins\otimes q) -W_R\,,
\]
where $q$ is $f\mapsto -f''$ in $L^2(\rz)$ and $W_R$ is identified with the associated multplication operator.

Let $P$ be the { orthogonal projection} on $\rz\,\psi_0$ in $L^2(\rz_+)$, then
$\Pi:=P\otimes\eins$ is the { orthogonal projection} on $\psi_0 \otimes L^2(\rz)$
in $L^2(\rz_+\times\rz)$, i.e.
\begin{equation}\label{Profjection}
(\Pi \varphi)(x_1,x_2)=\psi_0(x_1) f(x_2),
\quad f(x_2):=\int_{\mathbb{R}_+} \varphi(x_1,x_2)\psi_0(x_1)\, \ud x_1{\,.}
\end{equation}
Notice that $\Pi$ is exactly the spectral projector on $\{\varepsilon_0\}$
for $h\otimes \eins$
(due to the fact that $\varepsilon_0$ is a simple eigenvalue, see Proposition~\ref{propa3})
and it commutes with $\eins\otimes q$.
We set $\Pi^\perp:=\eins-\Pi$. Taking into account that both
$\Pi\varphi$ and $\Pi^\perp\varphi$ are in $\cD[\widehat{Q}_2]$,
for $\varphi\in\cD[\widehat Q_2]$ we obtain
\begin{equation}\label{hq0}
\widehat{Q}_2[\varphi,\varphi]=\widehat{Q}_2[\Pi\varphi,\Pi\varphi]+\widehat{Q}_2[\Pi^\perp\varphi,\Pi^\perp\varphi]-2W_R[\Pi\varphi,\Pi^\perp\varphi]\,.
\end{equation}
As $W_R$ is bounded, using Cauchy-Schwarz and triangular inequalities we estimate
\begin{align*}
\big|2W_R[\Pi\varphi,\Pi^\perp\varphi]\big|&=
2\big|\langle W_R\Pi\varphi,\Pi^\perp\varphi\rangle_{L^2(\rz_+\times\rz)}\big|\\
&\leq 2 \|W_R\Pi \varphi\|_{L^2(\mathbb{R}_+ \times \mathbb{R})}\|\Pi^\perp \varphi\|_{L^2(\mathbb{R}_+ \times \mathbb{R})} \\
&\leq R\|W_R\Pi \varphi\|^2_{L^2(\rz_+ \times \rz)}+\frac{1}{R}\|\Pi^\perp \varphi\|^2_{L^2(\rz_+ \times \rz)}\, .
\end{align*}
Due to the assumption (B) on $v$, the eigenvalue $\varepsilon_0$ of $h$ is isolated, hence, $E_2:=\inf \big(\sigma(h)\setminus\{\varepsilon_0\}\big)>\varepsilon_0$, and
\[
(h\otimes \eins)[\Pi^\perp\varphi,\Pi^\perp\varphi]\ge E_2 \|\Pi^\perp \varphi\|^2_{L^2(\mathbb{R}_+ \times \mathbb{R})}\,.
\]
It follows that, taking into account that the operator $q$ is non-negative,
\begin{align*}
\widehat{Q}_2[\Pi^\perp \varphi,\Pi^\perp \varphi]
&=(h\otimes \eins)[\Pi^\perp\varphi,\Pi^\perp\varphi]
+(\eins\otimes q)[\Pi^\perp\varphi,\Pi^\perp\varphi]
-W_R[\Pi^\perp\varphi,\Pi^\perp\varphi]\\
&\ge E_2 \|\Pi^\perp \varphi\|^2_{L^2(\mathbb{R}_+ \times \mathbb{R})}
-W_R[\Pi^\perp\varphi,\Pi^\perp\varphi]\,.
\end{align*}
Summing up all the computations after \eqref{hq0} { yields}
\begin{multline}
  \label{ineq1}
\widehat{Q}_2[\varphi,\varphi]\ge 
\widehat{Q}_2[\Pi\varphi,\Pi\varphi]-R\|W_R\Pi \varphi\|^2_{L^2(\rz_+ \times \rz)}\\
+\Big(E_2 -\frac{1}{R}\Big)\|\Pi^\perp \varphi\|^2_{L^2(\mathbb{R}_+ \times \mathbb{R})}
-W_R[\Pi^\perp\varphi,\Pi^\perp\varphi]\,.
\end{multline}
Let $A$ be the self-adjoint operator in $\ran\Pi$ given by
\[
A[\Phi,\Phi]=\widehat{Q}_2[\Phi,\Phi]-R\|W_R\Phi\|^2_{L^2(\rz_+ \times \rz)}\,,
\quad
\cD[A]=\cD[\widehat{Q}_2]\cap\ran\Pi,
\]
and $B$ be the operator of multiplication by $E_2 -1/R-W_R$ in $\ran \Pi^\perp$, which is bounded and self-adjoint.
Considering the unitary map
\[
J:L^2(\rz_+\times\rz)\ni\varphi\mapsto (\Pi\varphi,\Pi^\perp \varphi)\in \ran\Pi \oplus \ran \Pi^\perp
\]
we rewrite \eqref{ineq1} as
$\widehat{Q}_2[\varphi,\varphi]\ge (A\oplus B)[J\varphi,J\varphi]$, which due to the min-max principle implies
\begin{equation}
  \label{eqnb}
N(\widehat{Q}_2,\varepsilon_0)\le N(A\oplus B,\varepsilon_0)=N(A,\varepsilon_0)+N(B,\varepsilon_0).
\end{equation}
As $E_2>\varepsilon_0$ is fixed and $\|W_R\|_\infty\le c/R^2$, for sufficiently large $R$ { and some $c>0$} one has the lower bound $E_2 -1/R+W_R\ge \varepsilon_0$ showing that
 $B$ has no spectrum in $(-\infty,\varepsilon_0)$ and { hence} $N(B,\varepsilon_0)=0$.
The estimate \eqref{eqnb} takes the form $N(\widehat{Q}_2,\varepsilon_0)\le N(A,\varepsilon_0)$, and now it is sufficient
to show that $N(A,\varepsilon_0)<\infty$ { for} $R$ being sufficiently large.

In order to study $A$ we rewrite, using the convention~\eqref{Profjection},
\begin{align*}
\widehat{Q}_2[\psi_0\otimes f,\psi_0\otimes f]&=\int_{\rz}
\Big(f^{\prime}(x_2)^2+\big(\varepsilon_0-U_R(x_2)\big)f(x_2)^2 \Big)\, \ud x_2\,, \\
U_R(x_2)&:=\int_{\mathbb{R}_+}W_R(x_1,x_2)\psi_{0}(x_2)^2\, \ud x_1\,, \nonumber\\
\big\|W_R \,(\psi_0\otimes f)\big\|^2_{L^2(\rz_+ \times \rz)}&=
\int_\rz V_R(x_2) f(x_2)^2 \ud x_2\,,\\
V_R(x_2)&:=\int_{\rz_+}W_R(x_1,x_2)^2\psi_{0}(x_2)^2\, \ud x_1\,.
\end{align*}
{ Recall} that for $\varphi \in \cD[\widehat{Q}_2]\equiv \cD[h\otimes \eins + \eins\otimes q]$
due to the spectral theorem one has $\Pi \varphi\equiv (P\otimes\eins)\varphi \equiv \psi_0\otimes f \in \cD[\eins\otimes q]$,
i.e. $f \in \cD[q]=H^1(\mathbb{R})$. { Consequently, one has}
\[
A[\psi_0\otimes f,\psi_0\otimes f] =\varepsilon_0\| f\|^2_{L^2(\rz)}+q_0[f,f]
\]
with $q_0$ being the self-adjoint operator in $L^2(\rz)$ given by the form
\[
q_0[f,f]:=\int_\rz\Big(
f'(x_2)^2 - Z_R(x_2) f(x_2)^2
\Big)\ud x_2\,,\quad Z_R=U_R+RV_R\,,
\]
defined on $\cD[q_0]=H^1(\rz)$. As the map $\ran\Pi\ni \psi_0\otimes f\mapsto f \in L^2(\rz)$ is unitary, one sees that
$A$ is unitarily equivalent to $q_0+\varepsilon$, which yields $N(A,\varepsilon_0)=N(q_0,0)$.

Now it is sufficient to show that $q_0$ has only finitely many negative eigenvalues. The task is simplified by the fact that $q_0$ is { a standard} one-dimensional
Schr\"odinger operator. { Recall that,} by construction one has
$W_R\in L^\infty$ and
\[
\text{supp}\,W_R\subset\big\{(x_1,x_2):R<x_2-x_1<2R\big\}\,,
\]
i.e. $W_R(x_1,x_2)$ vanishes except for $x_2-2R<x_1<x_2-R$.
Due to
\begin{equation}
   \label{zr1}
\begin{aligned}
Z_R(x_2)&=\int_{\rz_+\cap[x_2-2R,x_2-R]} \Big(W_R(x_1,x_2)+RW_R(x_1,x_2)^2\Big)\psi_{0}(x_1)^2\ud x_1\\
&\le \Big(\|W\|_\infty+R\|W\|_\infty^2\Big) \int_{\rz_+\cap[x_2-2R,x_2-R]} \psi_{0}(x_1)^2\ud x_1\\
&\le \|W\|_\infty+R\|W\|_\infty^2
\end{aligned}
\end{equation}
it follows that $Z_R$ is bounded, continuous, and $Z_R(x_2)=0$ for $x_2\le R$.
In view of the well-known Bargman estimate (see, e.g., Theorem 5.1 in Chapter 2.5 of~\cite{berezin1991schrodinger})
in order to obtain $N(q_0,0)<\infty$ it is sufficient to show
\begin{equation}
   \label{eq-bargm}
\int_\rz |x_2|\,Z_R(x_2)\, dx_2\equiv \int_R^\infty x_2\,Z_R(x_2)\, dx_2<\infty
\end{equation}
(recall that $W_R\ge 0$, and then $Z_R\ge 0$ as well).

In order to obtain \eqref{eq-bargm} we recall that due to the standard Agmon estimate (see e.g. Corollary~\ref{cora7} in Appendix) for some $a>0$ one has
\[
c_0:=\int_{\rz_+}e^{ax_1}\psi_0(x_1)^2\, \ud x_1<\infty\,.
\]
For $x_2\ge 2R$ one then estimates, using \eqref{zr1},
\begin{align*}
Z(x_2)
&\le \big(\|W\|_\infty+R\|W\|_\infty^2\big) \int_{x_2-2R}^{x_2-R} \psi^2_{0}(x_1)\, \ud x_1\\
&\le \big(\|W\|_\infty+R\|W\|_\infty^2\big) e^{-a(x_2-2R)}\int_{x_2-2R}^{x_2-R} e^{ax_1}\psi^2_{0}(x_1)\, \ud x_1\\
&\le c_0 \big(\|W\|_\infty+R\|W\|_\infty^2\big) e^{-a(x_2-2R)}=c_1 e^{-ax_2}
\end{align*}
with $c_1:=\Big(\|W\|_\infty+R\|W\|_\infty^2\Big) e^{2Ra}$.
Hence,
\begin{multline*}
\int_R^\infty x_2 Z_R(x_2)\, dx_2=\int_R^{2R} x_2 Z_R(x_2)\, dx_2+\int_{2R}^\infty x_2 Z_R(x_2)\, dx_2\\
\le 2R^2\|Z_R\|_\infty +c_1\int_R^\infty x_2 e^{-ax_2}\ud x_2<\infty\,.
\end{multline*}
This proves \eqref{eq-bargm} and completes the proof.

\appendix

%

\section{Some constructions for Schr\"odinger operators with singular potentials}\label{Appendix}
In this section we recall briefly some facts related to Schr\"odinger operators with singular potentials. All these facts are { well-known} to the specialists but we are not aware of their presentation within a single reference and in a suitable form under our rather weak assumptions on the potential $v$, and we decided to collect them here with proofs. An interested reader may refer e.g. to~\cite{Teschl:2013}
for a more detailed discussion of singular potentials.

For the whole of this section,
{ we write} $\rz_+=(0,\infty)$ and let $v\in L^1_\text{loc}(\rz_+)$ be a real-valued potential with $v_-:=\max\{-v,0\big\}\in L^\infty(\rz_+)$. Let $h$ be the self-adjoint operator in $L^2(\rz_+)$ generated by its bilinear form
\begin{gather*}
h[\varphi,\varphi]=\int_{\rz_+} \big(|\varphi'|^2 + v |\varphi|^2\big)\, \ud x\,,\\
\cD[h]=\big\{ \varphi\in H^1(\rz_+): \int_{\rz_+} v\,\varphi^2\, \ud x<\infty\big\}\,.
\end{gather*}
Recall that $\cD[h]$ stands for the form domain, while the operator domain is denoted { by }$\cD(h)$.
In other words, a function $\psi$ belongs to the operator domain $\cD(h)$ of $h$ and $h\psi=\psi_h$
if and only if
\[
\psi\in \cD[h] \quad \text{and} \quad
h[\varphi,\psi]=\int_{\rz_+} \varphi\, \psi_h\, \ud x
\quad \text{for all} \quad \varphi\in \cD[h]\,.
\]
As the preceding equality holds for all $\varphi\in C^\infty_0(\rz_+)\subset\cD[h]$, it follows
that $h$ acts as $h\psi=-\psi''+v\psi$.
We give a proof of the following technical fact:

\begin{prop}
\label{ertyu}
Let $\psi\in\cD(h)$ and $\chi\in C^\infty_0(\rz)$ with $\chi$ being constant in a neighborhood of $0$, then
$\chi \psi\in \cD(h)$ and $h(\chi \psi)=\chi h(\psi) -2\chi'\psi'-\chi'' \psi$.  
\end{prop}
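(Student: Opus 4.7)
The plan is to verify the claim in two stages: first that $\chi\psi$ lies in the form domain $\cD[h]$, and then to identify its image under $h$ by testing against the form.

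For the form domain membership, since $\psi\in\cD(h)\subset\cD[h]\subset H^1(\rz_+)$ and $\chi$ is smooth with compact support, clearly $\chi\psi\in H^1(\rz_+)$. For the integrability of $v(\chi\psi)^2$, I would use the assumption $v_-\in L^\infty(\rz_+)$ to conclude that $v_-\psi^2\in L^1(\rz_+)$, hence $v_+\psi^2=v\psi^2+v_-\psi^2\in L^1(\rz_+)$ from $\psi\in\cD[h]$, and then bound $|v|(\chi\psi)^2\le\|\chi\|_\infty^2|v|\psi^2$. By the same reasoning $\chi\varphi\in\cD[h]$ for every $\varphi\in\cD[h]$.

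Next, to show $\chi\psi\in\cD(h)$ with the desired action, I would take an arbitrary $\varphi\in\cD[h]$ and compare $h[\varphi,\chi\psi]$ to $h[\chi\varphi,\psi]$. Expanding the product rule inside the form gives
\[
h[\varphi,\chi\psi]-h[\chi\varphi,\psi]=\int_{\rz_+}\chi'(x)\bigl(\varphi'(x)\psi(x)-\varphi(x)\psi'(x)\bigr)\,\ud x.
\]
I would then integrate the term $\int\chi'\varphi'\psi$ by parts, rewriting it as $-\int(\chi'\psi)'\varphi$ plus a boundary contribution. The function $\chi'\psi$ lies in $H^1(\rz_+)$ with derivative $\chi''\psi+\chi'\psi'\in L^2(\rz_+)$, so the integration by parts is legitimate; the boundary term at infinity is zero by $\supp\chi$ being compact, and the boundary term at $0$ vanishes because $\chi$ is constant near $0$, so $\chi'(0)=0$. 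This is the crucial place where the hypothesis on $\chi$ enters, and it is really the only subtle step.

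After the integration by parts the difference reduces to $-\int(\chi''\psi+2\chi'\psi')\varphi\,\ud x$, and using $\psi\in\cD(h)$ one writes $h[\chi\varphi,\psi]=\int\chi\varphi\,(h\psi)\,\ud x$, yielding
\[
h[\varphi,\chi\psi]=\int_{\rz_+}\varphi\,\bigl(\chi\, h\psi-2\chi'\psi'-\chi''\psi\bigr)\,\ud x
\quad\text{for every }\varphi\in\cD[h].
\]
Finally I would check that $g:=\chi\,h\psi-2\chi'\psi'-\chi''\psi$ belongs to $L^2(\rz_+)$: $\chi\,h\psi\in L^2$ since $h\psi\in L^2$ and $\chi$ is bounded, while $\chi'\psi'$ and $\chi''\psi$ are bounded compactly supported multiples of $L^2$-functions. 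By the variational definition of $\cD(h)$ recalled in the excerpt, this forces $\chi\psi\in\cD(h)$ and $h(\chi\psi)=g$, which is the stated formula.
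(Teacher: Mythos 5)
Your proof is correct and follows essentially the same route as the paper: reduce to comparing $h[\varphi,\chi\psi]$ with $h[\chi\varphi,\psi]$, observe that the difference is $\int\chi'(\varphi'\psi-\varphi\psi')$, and then kill the derivative on $\varphi$ by an integration by parts whose boundary terms vanish precisely because $\chi'$ is compactly supported and vanishes near $0$. The paper phrases this step as $\int(\varphi\chi'\psi)'\,\ud x = 0$ rather than as an explicit integration by parts on $\int\chi'\varphi'\psi$, but the two are the same identity; your additional verifications that $\chi\psi\in\cD[h]$ (via $v_-\in L^\infty$) and that the candidate image lies in $L^2(\rz_+)$ are correct details that the paper leaves implicit.
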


\begin{proof}
Remark first that $\chi \psi\in\cD[h]$. { Then, we simply} need to show that
\begin{equation}
  \label{h01}
h[\varphi,\chi \psi]=\int_{\rz_+} \varphi (\chi h\psi -2\chi'\psi'-\chi'' \psi)\, \ud x
\end{equation}
for any $\varphi\in\cD[h]$. On the other hand, the assumption $\psi\in \cD(h)$ already gives
\begin{equation}
  \label{h02}
h[\chi\varphi,\psi]=\int_{\rz_+} \chi \varphi (-\psi''+v\psi)\, \ud x\,.
\end{equation}
Taking the difference between \eqref{h01} and \eqref{h02}
one sees that it is sufficient to show the equality
\[
h[\varphi,\chi \psi]-h[\chi\varphi,\psi]=-\int_{\rz_+} (2\varphi \chi'\psi'+\varphi\chi'' \psi)\, \ud x\,,
\]
which reads in a more detailed form as
\begin{equation}
    \label{aa1}
\int_{\rz_+} (\varphi'\chi'\psi-\varphi\chi'\psi')\, \ud x=-\int_{\rz_+} (2\varphi \chi'\psi'+\varphi\chi'' \psi)\, \ud x\,.
\end{equation}
One clearly has
\[
\begin{split}
\int_{\rz_+} (\varphi'\chi'\psi+\varphi \chi'\psi'+\varphi\chi'' \psi)\, \ud x&=\int_{\rz_+} (\varphi \chi'\psi)'\, \ud x\\
&=(\varphi \chi'\psi)(\infty)-(\varphi \chi'\psi)(0)=0\,.
\end{split}
\]
By regrouping the terms one arrives at \eqref{aa1}, which concludes the proof.
\end{proof}

For each $\psi\in\cD(h)$
one has $-\psi''+v\psi\in L^2(\rz_+)$. Due to the inclusions
$\cD(h)\subset\cD[h]\subset H^1(\rz_+)\subset L^\infty(\rz_+)$
it follows that $v\psi\in L^1_\text{loc}(\rz_+)$ and then
$\psi''\in L^1_\text{loc}(\rz_+)$ and $\psi'\in C^1(\rz_+)$.
That implies that the values $\psi(y)$ and $\psi'(y)$ make sense for any $y\in\rz_+$.
Let us add some precisions on the behavior near $0$ and $\infty$.

\begin{prop}\label{Randterme} Let $\psi\in\cD(h)$, then
\begin{equation}
\label{eqs6}
\lim_{x \rightarrow 0}(\psi^{\prime}\psi)(x)=:(\psi\psi^{\prime})(0)=0 \ , \quad \lim_{x \rightarrow \infty}(\psi^{\prime}\psi)(x)=:(\psi\psi^{\prime})(\infty)=0\, ,
\end{equation}
and the integration-by-parts formula
\begin{equation*}
\label{eqs6a}
\int_{0}^{y} \psi^{\prime}(x)\psi^{\prime}(x)\, \ud x=\psi(y)\psi^{\prime}(y)-\int_{0}^{y} \psi(x)\psi^{\prime \prime}(x)\, \ud x 
\end{equation*}
holds for $y\in\rz_+$.
\end{prop}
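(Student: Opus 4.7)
My plan is to establish first that $\psi\psi'$ is globally absolutely continuous on $\rz_+$ with an $L^1$ derivative (so that the boundary values in~\eqref{eqs6} exist as finite limits), to pin them down one by one, and then to deduce the integration-by-parts formula. The preamble before the statement already gives $\psi_h:=h\psi\in L^2(\rz_+)$ and $\psi''=v\psi-\psi_h\in L^1_\text{loc}(\rz_+)$, so $\psi\psi'$ is locally absolutely continuous on $\rz_+$ with
\[
(\psi\psi')'=(\psi')^2+\psi\psi''=(\psi')^2+v\psi^2-\psi\,\psi_h.
\]
I would check that all three summands are in $L^1(\rz_+)$: $(\psi')^2$ because $\psi\in H^1$; $\psi\,\psi_h$ by Cauchy--Schwarz; and $v\psi^2$ because the negative part is dominated by $\|v_-\|_\infty\psi^2\in L^1$ while $\psi\in\cD[h]$ supplies $\int v\psi^2<\infty$. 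Hence $\psi\psi'$ admits finite limits $M:=\lim_{x\to 0^+}(\psi\psi')(x)$ and $L:=\lim_{x\to\infty}(\psi\psi')(x)$.

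For the limit at infinity, the plan is to combine the $L^1$-integrability of $(\psi^2)'=2\psi\psi'$ with $\psi^2\in L^1$: this shows that $\psi^2(x)$ tends to a limit, which must equal zero by integrability. If $L\ne 0$ then $(\psi^2)'(x)\to 2L$ would force $\psi^2(x)$ to grow linearly (or become negative), contradicting what was just shown. Hence $L=0$.

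The heart of the proof is the vanishing at the origin, and I would exploit that $\psi\in\cD(h)$ rather than only $\psi\in\cD[h]$. Fix a test function $\varphi\in C^\infty_c([0,\infty))$ with $\varphi\equiv 1$ on $[0,1]$ and $\supp\varphi\subset[0,2]$; then $\varphi\in\cD[h]$ and the defining identity $h[\varphi,\psi]=\int\varphi\,\psi_h\,\ud x$, after the $\int v\varphi\psi\,\ud x$ terms cancel on both sides, collapses to
\[
\int_0^2\varphi'\psi'\,\ud x=-\int_0^2\varphi\,\psi''\,\ud x.
\]
On the other hand, for $0<\eps<1$ a classical integration by parts on $[\eps,2]$ (legitimate because $\psi'$ is absolutely continuous there) gives
\[
\psi'(\eps)=-\int_\eps^2\varphi'\psi'\,\ud x-\int_\eps^2\varphi\,\psi''\,\ud x.
\]
Both integrands are in $L^1(0,2)$, so the right-hand side converges as $\eps\to 0^+$; therefore the one-sided limit $\psi'(0^+)$ exists, and comparison with the previous display yields $\psi'(0^+)=0$. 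Since $\psi\in H^1(\rz_+)\hookrightarrow C_b([0,\infty))$, we have $\psi(\eps)\to\psi(0)$, and so $M=\psi(0)\cdot\psi'(0^+)=0$.

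Finally, the integration-by-parts formula is obtained by applying the standard formula on $[\eps,y]$,
\[
\int_\eps^y(\psi')^2\,\ud x=\psi(y)\psi'(y)-\psi(\eps)\psi'(\eps)-\int_\eps^y\psi\,\psi''\,\ud x,
\]
and letting $\eps\to 0^+$: the middle term vanishes by $M=0$, while the two integrals converge because their integrands belong to $L^1(0,y)$. The main obstacle in this plan is the argument at $0$: a generic element of the form domain $\cD[h]$ satisfies no boundary condition, and in the limit-point case at the origin even pointwise values of $\psi'$ need not exist, so one has to use genuinely that $\psi\in\cD(h)$ via the defining identity to produce the cancellation forcing $\psi'(0^+)=0$.
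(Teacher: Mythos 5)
Your argument for the limit at infinity and for the existence (as opposed to the value) of the limit at the origin is sound, and the observation that $(\psi\psi')'=(\psi')^2+v\psi^2-\psi\,\psi_h$ is globally $L^1$ is a nice clean route to the finiteness of both boundary values. However, the crucial step at the origin has a genuine gap, and it is precisely the subtlety the paper is designed to handle. You assert that the bump function $\varphi$ with $\varphi\equiv 1$ on $[0,1]$ lies in $\cD[h]$, but under the standing hypothesis $v\in L^1_{\mathrm{loc}}(\rz_+)$ with $\rz_+=(0,\infty)$, the potential may be non-integrable at the origin (think $v(x)=x^{-2}$), in which case $\int_0^2 v\varphi^2=\infty$ and $\varphi\notin\cD[h]$. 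Consequently the pairing $h[\varphi,\psi]=\int\varphi\psi_h\,\ud x$ is simply not available. For the same reason the claim that $\varphi\psi''$ lies in $L^1(0,2)$ fails: one has $\varphi\psi''=\varphi(v\psi-\psi_h)$, and replacing the extra factor $\psi$ (which made $v\psi^2$ integrable) by the non-vanishing $\varphi$ loses the integrability of the $v$-term near $0$.

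The paper avoids the issue by testing against $\chi\psi$ rather than against a free bump function: if $\chi\in C^\infty_0(\rz)$ is constant near $0$, then $\chi\psi$ inherits the bound $\int v(\chi\psi)^2\le\|\chi\|_\infty^2\int v_+\psi^2+\|\chi\|_\infty^2\|v_-\|_\infty\|\psi\|_2^2<\infty$, so $\chi\psi\in\cD(h)$ (Proposition~\ref{ertyu}), and the resulting identity $\int(\chi\psi\psi')'\,\ud x=0$ localizes the boundary term at $0$ because $\chi$ kills the contribution at infinity. Your plan can be salvaged by making exactly this substitution: keep your global-$L^1$ observation and your argument at infinity, but in the step at the origin replace the bump function $\varphi$ by $\chi\psi$ (or equivalently apply the paper's identity \eqref{eqa} to both $\psi$ and $\chi\psi$ and subtract).
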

\begin{proof} In view of the above regularity of $\psi$, for any $0<\epsilon<y$ one has the standard integration by parts
\begin{equation}
   \label{eq1}
\int_\epsilon^y \psi'(x)^2 \, \ud x=(\psi \psi')(y)-(\psi \psi')(\epsilon)-\int_{\epsilon}^{y} \psi(x)\psi''(x)\, \ud x\,,
\end{equation}
and we need to show that the passage to the limit $\epsilon\to 0^+$ is possible.
By the definition of $\cD(h)$ one has
\begin{multline*}
\int_{\rz_+} \big(\psi'(x)^2 + v(x) \psi(x)^2\big)\, \ud x\equiv h[\psi,\psi]\equiv
\langle\psi,h\psi\rangle_{L^2(\rz_+)}\\
=\int_{\rz_+}\psi(-\psi''+v\psi)\, \ud x=\lim_{\epsilon\to 0^+} \int_{\epsilon}^{\epsilon^{-1}} \psi(-\psi''+v\psi)\, \ud x\\
=\lim_{\epsilon\to 0^+}
\Big\{\int_{\epsilon}^{\epsilon^{-1}} \big[\psi'(x)^2 + v(x) \psi(x)^2\big]\, \ud x
-(\psi \psi')(\epsilon^{-1})+(\psi \psi')(\epsilon)\Big\}
\end{multline*}
implying
\begin{equation}
    \label{eqa}
\lim_{\epsilon\to 0^+} \Big((\psi \psi')(\epsilon^{-1})-(\psi \psi')(\epsilon)\Big)=0\,.
\end{equation}
Let $\chi\in C^\infty_0(\rz)$ such that $\chi=1$ near zero, then
$\chi \psi\in\cD(h)$ due to Proposition~\ref{ertyu}, and \eqref{eqa}
also holds for $\psi$ replaced by $\chi \psi$.
As $\chi\psi$ is identically zero at infinity and coincides with $\psi$ near the origin,
one obtains
\[
\lim_{\epsilon\to 0^+} (\chi\psi)(\epsilon)(\chi \psi)'(\epsilon)\equiv \lim_{\epsilon\to 0^+} (\psi \psi')(\epsilon)=0\,.
\]
Using \eqref{eqa} again one has $\lim_{\epsilon\to 0^+} (\psi \psi')(\epsilon^{-1})\equiv \lim_{x\to +\infty} (\psi \psi')(x)=0$. By passing to the limit $\epsilon\to 0^+$ in \eqref{eq1} one concludes the proof.
\end{proof}

{ Assume from now on that the bottom $\varepsilon_0$ of the spectrum of $h$ is an eigenvalue.}

\begin{prop}\label{propa3}
The eigenvalue $\varepsilon_0$ is simple, and the corresponding eigenfunction $\psi_0$ can be chosen strictly positive.
\end{prop}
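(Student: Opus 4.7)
The plan is to establish strict positivity of an eigenfunction first, and then deduce simplicity as a direct consequence.

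For strict positivity I would apply the standard variational argument based on the identity
\[
h[|\psi|,|\psi|]=h[\psi,\psi].
\]
If $\psi\in\cD(h)$ is a real eigenfunction of $h$ at $\varepsilon_0$, then $|\psi|\in H^1(\rz_+)$ with $|\psi|'=\sgn(\psi)\,\psi'$ almost everywhere, and $\int v\,|\psi|^2\,\ud x=\int v\,\psi^2\,\ud x<\infty$, so $|\psi|\in\cD[h]$ and the above identity holds. Since $\varepsilon_0=\inf_{0\ne\varphi\in\cD[h]} h[\varphi,\varphi]/\|\varphi\|^2$, the non-negative function $|\psi|$ realizes the infimum and hence is itself an eigenfunction of $h$ at $\varepsilon_0$; in particular $|\psi|\in\cD(h)$, so by the regularity discussion preceding the statement, $|\psi|\in C^1(\rz_+)$.

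Now suppose, for contradiction, that $\psi(x_0)=0$ for some $x_0\in\rz_+$. Since $\psi\in C^1(\rz_+)$, the expansion $\psi(x)=\psi'(x_0)(x-x_0)+o(x-x_0)$ near $x_0$ gives one-sided derivatives $+|\psi'(x_0)|$ and $-|\psi'(x_0)|$ for $|\psi|$ at $x_0$; their agreement (required by $|\psi|\in C^1$) forces $\psi'(x_0)=0$. Since $\psi$ solves $-\psi''+v\psi=\varepsilon_0\psi$ in the Carath\'eodory sense on $\rz_+$ with $v\in L^1_{\mathrm{loc}}(\rz_+)$, the classical uniqueness theorem for second-order linear ODEs with locally integrable coefficients yields $\psi\equiv 0$, a contradiction. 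Hence $\psi$ has no zeros in $\rz_+$ and is of constant sign by continuity; after multiplication by $\pm 1$ one obtains a strictly positive eigenfunction, which we name $\psi_0$.

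For simplicity, suppose for contradiction that the $\varepsilon_0$-eigenspace has dimension $\ge 2$, and pick an orthonormal pair $\phi_1,\phi_2$ in it. By the preceding step, each of $\phi_1,\phi_2$ is of constant sign on $\rz_+$, and after possibly flipping signs both are strictly positive. But then $\langle\phi_1,\phi_2\rangle_{L^2(\rz_+)}=\int_{\rz_+}\phi_1\phi_2\,\ud x>0$, contradicting their orthogonality. The main technical obstacle is the passage from Rayleigh-quotient minimality of $|\psi|$ to the operator-level assertion $|\psi|\in\cD(h)$: it rests on the standard but nontrivial fact that, for a semibounded self-adjoint operator, an attained minimizer of the form Rayleigh quotient is necessarily an eigenvector at the bottom of the spectrum. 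A minor additional subtlety is that ODE uniqueness must be invoked in the Carath\'eodory setting because $v$ is only $L^1_{\mathrm{loc}}$, but this is classical.
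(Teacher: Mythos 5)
Your proof is correct and follows essentially the same route as the paper: deduce $|\psi|\in\cD(h)\subset C^1(\rz_+)$ from the Rayleigh-quotient argument, infer $\psi'(x_0)=0$ at any zero from $C^1$-regularity of $|\psi|$, invoke ODE uniqueness to force $\psi\equiv 0$, and then get simplicity from the impossibility of orthogonal positive functions. The only cosmetic difference is that the paper spells out the Gronwall estimate explicitly rather than citing Carath\'eodory uniqueness as a black box.
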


\begin{proof} { Let $\psi_0\in \ker(h-\varepsilon_0)$ with $\psi_0\not\equiv 0$ be given}. Due to the min-max principle { this} is equivalent to
\[
\dfrac{h[\psi_0,\psi_0]}{\|\psi_0\|_{L^2(\rz_+)}}=\min_{\psi\in\cD[h],\, \psi\ne 0}\dfrac{h[\psi,\psi]}{\|\psi\|_{L^2(\rz_+)}}\,.
\]
For $|\psi_0|\in \cD[h]$ one has $h\big[|\psi_0|,|\psi_0|\big]\le h[\psi_0,\psi_0]$
and $\big\||\psi_0|\big\|_{L^2(\rz_+)}=\|\psi_0\|_{L^2(\rz_+)}$,
which shows that $|\psi_0|\in \ker(h-\varepsilon_0)\subset\cD[h]\subset C^1(\rz_+)$.

Assume that $\psi_0(a)=0$ for some $a>0$, then from $|\psi_0|\in C^1(\rz_+)$ it follows that $\psi_0'(a)=0$.
Let us show that this implies $\psi_0(x)=0$ for all $x>0$. That is essentially
Gronwall's lemma, but we prefer to include it for completeness.
To be definite, consider $x>a$ (the other case $x<a$ is considered in the same way).
The fact $h\psi_0=\varepsilon_0\psi_0$ can be rewritten as
\[
\Psi(x)=\int_a^x M(t) \Psi(t)\, \ud t\,, \quad
\Psi(x)=\begin{pmatrix}
\psi_0(x)\\ \psi_0'(x)
\end{pmatrix}\,,
\quad
M(x)=\begin{pmatrix}
0 & 1\\
v(x)-E & 0
\end{pmatrix}\, { .}
\]
{ Then} for $f:=|\Psi|_{\rz^2}\ge 0$ and $m:=\|M\|\in L^1_\text{loc}(\rz_+)$ one has
\[
f(x)\le\int_a^x m(t) f(t)\, \ud t\,
\le \varepsilon +\int_a^x m(t) f(t)\, \ud t=:\Phi(x){ Punkt weg}
\]
for all $\varepsilon>0$ and $x>a$. Therefore, $\Phi'(x)/\Phi(x)\le m(x)$,
so by integrating between $a$ and $x$ one arrives at
\[
\Phi(x)\le \Phi(a)\exp\int_a^x m(t)\, \ud t \text{ for $x>a$\,.}
\]
Due to $f\le\Phi$ and $\Phi(a)=\varepsilon$ one obtains
\[
0\le f (x)\le \varepsilon \exp\int_a^x m(t)\, \ud t\,, \quad x>a\,.
\]
As $\varepsilon>0$ is arbitrary, one obtains $f(x)=0$ for $x>a$, which implies $\psi_0(x)=0$ for $x>a$.

We conclude that an eigenfunction $\psi_0\in\ker(h-\varepsilon_0)$ cannot vanish, hence, up to a multiplicative
factor it is strictly positive. As two strictly positive functions cannot be orthogonal in $L^2$,
the eigenvalue $\varepsilon_0$ is simple.
\end{proof}

For the rest of the section, let $\psi_0$ be the strictly positive eigenfunction for $\varepsilon_0$, with a unit $L^2$-norm.

\begin{prop}\label{RegularityFunction} Let $\rho >\dfrac{1}{2}$, then the function
	\begin{equation*}
\phi:y\mapsto \left(\int_{0}^{y}\psi_0(x)^2\, \ud x\right)^{\rho}
	\end{equation*}
	is in $H^1(0,a)$ for any $a>0$.
\end{prop}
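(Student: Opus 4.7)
The plan is to identify the pointwise derivative of $\phi$ on $(0,a]$ via the chain rule, bound its $L^2$-norm using the hypothesis $\rho>1/2$, and then upgrade this to absolute continuity on the whole interval $[0,a]$.

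First I would set $F(y):=\int_0^y\psi_0(x)^2\,\ud x$, so that $\phi=F^\rho$. Since $\psi_0\in H^1(\rz_+)\hookrightarrow L^\infty(\rz_+)$, the function $F$ is Lipschitz on $[0,a]$ with $F(0)=0$ and $F'(y)=\psi_0(y)^2$ for a.e.\ $y$; by Proposition~\ref{propa3} one has $\psi_0>0$ on $(0,\infty)$, hence $F>0$ on $(0,a]$. For any $\delta\in(0,a)$ the range of $F|_{[\delta,a]}$ is contained in the compact set $[F(\delta),F(a)]\subset(0,\infty)$, on which $t\mapsto t^\rho$ is of class $C^1$. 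The composition rule for absolutely continuous functions then gives that $\phi$ is absolutely continuous on $[\delta,a]$, with
\[
\phi'(y)=\rho F(y)^{\rho-1}\psi_0(y)^2\qquad\text{for a.e. }y\in(\delta,a),
\]
and since $\delta>0$ is arbitrary this identity is valid for a.e.\ $y\in(0,a)$.

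The central estimate is to absorb one factor of $\psi_0^2$ into $\|\psi_0\|_\infty^2$ and to recognize that $F^{2\rho-2}F'$ is the a.e.\ derivative of the absolutely continuous function $F^{2\rho-1}/(2\rho-1)$ on $[\delta,a]$, giving
\[
\int_\delta^a\phi'(y)^2\,\ud y\le \rho^2\|\psi_0\|_\infty^2\int_\delta^a F^{2\rho-2}F'\,\ud y=\frac{\rho^2\|\psi_0\|_\infty^2}{2\rho-1}\bigl(F(a)^{2\rho-1}-F(\delta)^{2\rho-1}\bigr).
\]
Because $2\rho-1>0$, the right-hand side stays bounded as $\delta\to 0^+$, so monotone convergence yields $\phi'\in L^2(0,a)$; Cauchy--Schwarz then also gives $\phi'\in L^1(0,a)$.

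Finally, $\phi$ is continuous on $[0,a]$ with $\phi(0)=0$ and satisfies $\phi(y)=\phi(\delta)+\int_\delta^y\phi'\,\ud x$ for every $0<\delta<y\le a$ by the absolute continuity on $[\delta,a]$; passing to $\delta\to 0^+$ (using continuity on the left and dominated convergence on the right) gives $\phi(y)=\int_0^y\phi'\,\ud x$, so that $\phi$ is absolutely continuous on $[0,a]$ with weak derivative $\phi'\in L^2(0,a)$. Since $\phi$ is bounded on $[0,a]$, also $\phi\in L^2(0,a)$, establishing $\phi\in H^1(0,a)$. The only real obstacle is the potential blow-up of the factor $F^{\rho-1}$ at $y=0$ when $\rho<1$; it is tamed precisely by the assumption $\rho>1/2$, which makes the explicit antiderivative $F^{2\rho-1}/(2\rho-1)$ vanish at the origin and thereby keeps the key estimate finite.
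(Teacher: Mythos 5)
Your proof is correct and follows essentially the same line as the paper: both reduce to showing $\int_0^a \rho^2\psi_0^4 F^{2\rho-2}\,\ud y<\infty$ and both exploit $\psi_0\in L^\infty$ to turn this into $\int_0^a F^{2\rho-2}F'\,\ud y$, which is finite precisely when $2\rho-1>0$. The only cosmetic difference is that the paper performs the substitution $s=F(y)$ (using that $F$ is a diffeomorphism since $\psi_0>0$), whereas you recognize $F^{2\rho-2}F'$ directly as an exact derivative and cut off at $\delta>0$ before passing to the limit; your version is slightly more elementary and also supplies the absolute-continuity step needed to conclude $\phi\in H^1(0,a)$, which the paper leaves implicit.
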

\begin{proof} Since $\phi \in L^{\infty}(\rz_+)$, we only have to take care of the derivative. A direct calculation shows that
	\begin{equation*}
	\int_{0}^{a} \phi^{\prime}(y)^2\, \ud y=\rho^2\int_{0}^{a}
	\psi_0(y)^4\left(\displaystyle\int_{0}^{y}\psi_0(x)^2\, \ud x\right)^{2(\rho-1)}\, \ud y\, .
	\end{equation*}
	Let $y(\cdot)$ be the inverse of
	\[
	F:=y\mapsto \int_{0}^{y}\psi_0(x)^2\, \ud x\,,
	\]
	which is a diffeomorphism due to $\psi_0>0$ (Proposition~\ref{propa3}), then
	\begin{equation*}
	y^{\prime}(s)=\frac{1}{F^{\prime}(F^{-1}(s))}=\frac{1}{\psi_0\big(F^{-1}(s)\big)^2}\,,
	\end{equation*}
	and consequently
	\[
	\|\phi^{\prime}\|^2_{L^2(0,a)}= \rho^2\int_{0}^{F(a)} \psi_0\big(F^{-1}(s)\big)^2 s^{2(\rho-1)} \, \ud s\,.
	\]
	As $\psi_0\in L^\infty(\rz_+)$, the integral is finite for $\rho > \frac{1}{2}$.
\end{proof}

For the rest of the section we assume finally that
\begin{equation}
    \label{vvv}
v_\infty:=\liminf_{x\to+\infty} v(x)>\varepsilon_0\,.
\end{equation}
For $L>0$, define two operators $h^{N/D}_L$ in $L^2(0,L)$ by
\[
h^{N/D}_L[\varphi,\varphi]=\int_0^L \big((\varphi')^2 + v \varphi^2\big)\, \ud x
\]
with form domains
\begin{align*}
\cD[h^N_L]&=\Big\{\varphi\in H^1(0,L):\, \int_0^L v\varphi^2\, \ud x <\infty\Big\}\,,\\
\cD[h^D_L]&=\big\{\varphi\in \cD[h^N_L]:\,\varphi(L)=0\big\}\,,
\end{align*}
and denote by $\varepsilon^{(N/D)}_0(L)$ the respective lowest
eigenvalues.  
\begin{prop}\label{CorollaryConvergenceEnergy} There holds $\lim_{L \rightarrow \infty}\varepsilon^{(N)}_0(L)=\varepsilon_0$.
\end{prop}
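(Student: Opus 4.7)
I would split the proof into the two inequalities $\limsup_{L\to\infty}\varepsilon_0^{(N)}(L)\le\varepsilon_0$ and $\liminf_{L\to\infty}\varepsilon_0^{(N)}(L)\ge\varepsilon_0$.

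\emph{Upper bound.} I would use the ground state $\psi_0$ of $h$ as a test function. Since $\psi_0\in\cD[h]\subset H^1(\rz_+)$ and $v\psi_0^2\in L^1(\rz_+)$, the restriction $\psi_0|_{(0,L)}$ lies in $\cD[h_L^N]$. The min-max principle then gives
\[
\varepsilon_0^{(N)}(L)\;\le\; \frac{h_L^N[\psi_0,\psi_0]}{\|\psi_0\|_{L^2(0,L)}^{2}}\;=\;\frac{\varepsilon_0-\int_L^{\infty}\big((\psi_0')^2+v\psi_0^2\big)\,\ud x}{1-\int_L^{\infty}\psi_0^2\,\ud x}\;\xrightarrow[L\to\infty]{}\;\varepsilon_0,
\]
the numerator and denominator correction terms vanishing by integrability.

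\emph{Lower bound.} Let $\varphi_L$ be a normalized ground state of $h_L^N$ with eigenvalue $\mu_L=\varepsilon_0^{(N)}(L)$; by the upper bound $\mu_L\le\varepsilon_0+1$ for $L$ large. Assumption \eqref{vvv} provides $x_0>0$ and $\delta>0$ such that $v(x)\ge\varepsilon_0+2\delta$ for $x\ge x_0$. I would next perform an Agmon-type energy estimate on $\varphi_L$: multiply $-\varphi_L''+v\varphi_L=\mu_L\varphi_L$ by $e^{2\rho}\varphi_L$ with the Lipschitz weight $\rho(x)=\alpha(x-x_0)_+$ for some $0<\alpha<\sqrt{\delta}$, and integrate. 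The Neumann condition makes the boundary term at $L$ vanish, while the singular endpoint at $0$ is handled via a \mbox{Proposition~\ref{Randterme}-type} argument applied to $\cD(h_L^N)$. The result is a bound uniform in $L\ge L_0$,
\[
\int_0^L e^{2\alpha(x-x_0)_+}\,\varphi_L^2\,\ud x\;\le\;C,
\]
in particular $\int_{L-1}^L\varphi_L^2\,\ud x\le C\,e^{-2\alpha(L-1-x_0)}\to 0$.

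\emph{Closing the argument.} Take a smooth cutoff $\eta_L$ with $\eta_L=1$ on $(0,L-1)$, $\eta_L(L)=0$, and $\|\eta_L'\|_\infty$ bounded independently of $L$. Then $\eta_L\varphi_L$, extended by zero, belongs to $\cD[h]$; a direct integration by parts using the eigenvalue equation yields the IMS-type identity
\[
h[\eta_L\varphi_L,\eta_L\varphi_L]\;=\;\mu_L\,\|\eta_L\varphi_L\|^2\;+\;\int_0^L(\eta_L')^2\,\varphi_L^2\,\ud x.
\]
Combining with $\varepsilon_0\le h[\eta_L\varphi_L,\eta_L\varphi_L]/\|\eta_L\varphi_L\|^2$ from min-max gives
\[
\mu_L\;\ge\;\varepsilon_0-\frac{\int_0^L (\eta_L')^2\,\varphi_L^2\,\ud x}{\int_0^L \eta_L^2\,\varphi_L^2\,\ud x}\;\ge\;\varepsilon_0-o(1),
\]
since the numerator is dominated by $C\int_{L-1}^L\varphi_L^2\,\ud x\to 0$ while the denominator is bounded below by $1-o(1)$.

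\emph{Main obstacle.} The technical heart of the argument is the uniform-in-$L$ Agmon decay estimate. This requires (i) verifying the integration by parts rigorously for only $L^1_{\mathrm{loc}}$ potentials, relying on the endpoint vanishing of $\varphi_L\varphi_L'$ as in Proposition~\ref{Randterme}; (ii) choosing the exponent $\alpha$ so that $v-\mu_L-(\rho')^2$ remains bounded away from zero on $[x_0,L]$ with constants independent of $L$; and (iii) controlling the small ``bad'' region $[0,x_0]$ by the unit $L^2$ normalization of $\varphi_L$. The remaining steps (cutoff and IMS identity) are then routine.
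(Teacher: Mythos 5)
Your proposal is correct, but it takes a noticeably heavier route than the paper. For the upper bound you insert the restriction $\psi_0|_{(0,L)}$ into the Rayleigh quotient for $h^N_L$ and let the tail integrals vanish; the paper instead uses a Dirichlet--Neumann bracketing by splitting $\rz_+$ at $L$ (the operator $\tilde h^N_L$) and observing that the infimum over $(L,\infty)$ exceeds $\varepsilon_0$, which gives the slightly stronger pointwise bound $\varepsilon^{(N)}_0(L)\le\varepsilon_0$ for all large $L$. Both work. For the lower bound the paper performs a single IMS localization with a scaled partition of unity $\chi_1^2+\chi_2^2=1$ directly on an arbitrary $\varphi\in\cD[h^N_L]$: the inner piece $\chi^L_1\varphi$ vanishes at $L$, hence lies in $\cD[h^D_L]$ and inherits $\varepsilon^{(D)}_0(L)\ge\varepsilon_0$, while the outer piece sees $v\ge\varepsilon_0$, and the localization error is $O(L^{-2})$. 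Your route instead takes the normalized ground state $\varphi_L$ of $h^N_L$, proves a uniform-in-$L$ Agmon-type decay estimate, and then transplants $\eta_L\varphi_L$ into $\cD[h]$ via a cutoff and the IMS identity. This is correct provided you sharpen your bound $\mu_L\le\varepsilon_0+1$ to $\mu_L\le\varepsilon_0+\delta$ (available for large $L$ from your own upper bound), since otherwise $v-\mu_L-(\rho')^2$ need not be bounded away from $0$ on $[x_0,L]$; and in the Agmon step the boundary at $0$ is in fact harmless because the identity $h^N_L[e^{2\rho}\varphi_L,\varphi_L]=\mu_L\langle e^{2\rho}\varphi_L,\varphi_L\rangle$ holds purely at the form level, so no separate Proposition~\ref{Randterme}-type argument is really needed there. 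The trade-off: your argument is more robust in the sense that it produces an explicit exponential decay of $\varphi_L$ as a by-product, but the paper's bracketing plus partition of unity is shorter, avoids the Agmon machinery entirely at this stage, and gives the quantitative rate $\varepsilon_0-c/L^2\le\varepsilon^{(N)}_0(L)\le\varepsilon_0$.
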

\begin{proof}
By the min-max principle one has $\varepsilon_0 \leq \varepsilon^{(D)}_0(L)$ for all $L > 0$.

Let $\widetilde{h}^{N}_L$ be the operator on $L^{2}(L,\infty)$ associated with the form
\begin{gather*}
\widetilde{h}_{L}[\varphi,\varphi]:=\int_{L}^{\infty}\big((\varphi')^2+v \varphi^2\big) \, \ud x,\\
\cD[\widetilde{h}_{L}]:=\{\varphi\in H^1(L,\infty):\ \widetilde{h}_{L}[\varphi,\varphi] < \infty  \}\,.
\end{gather*}
Again, the min-max principle then implies that
\begin{equation}
   \label{eps00}
\varepsilon_0=\equiv \inf\sigma(h) \geq \inf \sigma(h^{N}_L \oplus \tilde{h}^{N}_L)\,.
\end{equation}
On the other hand,
$\inf \sigma(h^{N}_L \oplus \widetilde{h}^{N}_L)=\min\big\{ \varepsilon_0^N(L), \inf\sigma(\widetilde{h}^{N}_L)\big\}$, and due to the assumption~\eqref{vvv}
for sufficiently large $L$ one has $\inf\sigma(\widetilde{h}^{N}_L)>\varepsilon_0$.
	It follows from \eqref{eps00} that $\varepsilon_0\ge\varepsilon^{(N)}_0(L)$ for large $L$.
	
	Now let us take  $\chi_1,\chi_2 \in C^{\infty}(\mathbb{R})$ with
	\[
	\text{$\chi^2_1+\chi^2_2=1$, that $\chi_1(t)=1$ for $t\leq\frac{1}{2}$ and $\chi_1(t)=0$ for $t\ge 1$\,,}
	\]
	and set $\chi^{L}_j(t):=\chi_j(t/L)$, $j=1,2$. For any $\varphi \in \cD[h^N_L]$ we obtain
	\begin{equation*}\begin{split}
	h^{N}_L[\varphi,\varphi]&=h^{N}_L[\chi^{L}_1\varphi,\chi^{L}_1\varphi]+h^{N}_L[\chi^{L}_2\varphi,\chi^{L}_2\varphi]-\int_{0}^{L}\left[((\chi^{L}_1)^{\prime})^2+ ((\chi^{L}_2)^{\prime})^2\right]
	\varphi^2\, \ud x \\
	&\geq h^{N}_L[\chi^{L}_1\varphi,\chi^{L}_1\varphi]+h^{N}_L[\chi^{L}_2\varphi,\chi^{L}_2\varphi]-\frac{c}{L^2}\,\|\varphi\|^2_{L^2(0,L)}\, 
	\end{split}
	\end{equation*}
	for some constant $c > 0$. Since $(\chi^{L}_1\varphi)(x)=0$ for $x\ge L$ we conclude that
	$\chi^{L}_1\varphi\in \cD[h^D_L]$ and then
	\[
	h^{N}_L[\chi^{L}_1\varphi,\chi^{L}_1\varphi]=h^{D}_L[\chi^{L}_1\varphi,\chi^{L}_1\varphi]\\
	\ge\varepsilon^{(D)}_0(L) \|\chi^{L}_1\varphi\|^2_{L^2(0,L)}\ge
		\varepsilon_0 \|\chi^{L}_1\varphi\|^2_{L^2(0,L)}\,.
	\]
	Using the assumption \eqref{vvv} on $v$, one can choose $L$ sufficiently large to have
	$v\ge \varepsilon_0$ in $(-L/2,L)$. Due to { $\supp \chi^{L}_2\varphi\subset[L/2,L]$}
	there holds
	\begin{multline*}
	h^{N}_L[\chi^{L}_2\varphi,\chi^{L}_2\varphi]\ge \int_{L/2}^L v \,(\chi^{L}_1\varphi)^2\ud x\ge
	\varepsilon_0 \int_{L/2}^L (\chi^{L}_1\varphi)^2\ud x
	=\varepsilon_0 \|\chi^{L}_2\varphi\|^2_{L^2(0,L)}\,.
	\end{multline*}
Therefore, for large $L$ one has, uniformly in $\varphi\in \cD[h^N_L]$,
\begin{align*}
h^{N}_L[\varphi,\varphi]&\geq
\varepsilon_0 \|\chi^{L}_1\varphi\|^2_{L^2(0,L)}
+\varepsilon_0 \|\chi^{L}_2\varphi\|^2_{L^2(0,L)}-\frac{c}{L^2}\|\varphi\|^2_{L^2(0,L)}\\
&=\Big(\varepsilon_0-\frac{c}{L^2}\Big)\|\varphi\|^2_{L^2(0,L)}\,,
\end{align*}
{ which implies $\varepsilon^{(N)}_0(L)\geq
\varepsilon_0(L)-c/L^2$ due to the min-max principle}. Summing up we obtain, for $L>0$ large enough,
	\begin{equation*}
	\varepsilon_0-\frac{c}{L^2} \leq \varepsilon^{(N)}_0(L) \leq \varepsilon_0\,,
	\end{equation*}
	which proves the statement.
\end{proof}
In the next result we recall an Agmon-type estimate for the ground state $\psi_0$ of $h$. Recall that $\psi_0$ was chosen strictly positive and normalized in $L^2(\rz_+)$. 
\begin{prop}[Agmon-type estimate]\label{AgmonEstimate}
For any $\theta\in(0,1)$ there is $R>0$ with $v(x)\ge \varepsilon_0$ for $x\ge R$ such that
	\begin{equation*}
	\int_{0}^{\infty} e^{2\theta\Phi(x)}\psi_0(x)^2\, \ud x < \infty\, , 
	\qquad
	\Phi(x):=\begin{cases}
	0\,,& x \leq R\, , \\
	\displaystyle\int_{R}^{x}\sqrt{v(t)-\varepsilon_0}\, \ud t\,,& x > R\, .
	\end{cases}
	\end{equation*}
\end{prop}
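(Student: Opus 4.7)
The plan is to carry out a standard Agmon-type argument, which bounds an exponentially weighted $L^2$-norm of the eigenfunction by exploiting that $v$ exceeds $\varepsilon_0$ outside a bounded region. Using the hypothesis~\eqref{vvv}, one first chooses $R>0$ large enough that $v(x)\ge\varepsilon_0+\delta$ for some fixed $\delta>0$ and a.e. $x\ge R$; in particular $v\ge\varepsilon_0$ on $(R,\infty)$, so that $\Phi'=\sqrt{v-\varepsilon_0}$ is well defined there. To avoid working directly with an unbounded weight, I would introduce the Lipschitz truncations
\[
\Phi_n(x):=\min\{\theta\Phi(x),n\},\qquad \eta_n(x):=e^{\Phi_n(x)},
\]
so that $\eta_n\in L^\infty(\rz_+)$ with $\eta_n\equiv 1$ on $(0,R)$ and $|\eta_n'(x)|\le\theta\sqrt{v(x)-\varepsilon_0}\,\eta_n(x)$ for almost every $x>R$.

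The heart of the argument is the Agmon identity
\[
\int_{\rz_+}\bigl((\eta_n\psi_0)'\bigr)^2\,\ud x+\int_{\rz_+}(v-\varepsilon_0)\,\eta_n^2\psi_0^2\,\ud x=\int_{\rz_+}(\eta_n')^2\psi_0^2\,\ud x,
\]
which I would derive by testing the eigenvalue equation $h\psi_0=\varepsilon_0\psi_0$ against $\eta_n^2\psi_0$, integrating by parts on $(0,L)$, and letting $L\to\infty$. Because $\eta_n$ is bounded by $e^n$, each individual integral appearing after integration by parts is absolutely convergent (the worst term, $v\eta_n^2\psi_0^2$, is controlled by $e^{2n}v_+\psi_0^2+e^{2n}v_-\psi_0^2\in L^1(\rz_+)$), and the boundary contributions $\eta_n^2\psi_0\psi_0'$ at both $0$ and $\infty$ vanish by Proposition~\ref{Randterme}. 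The algebraic identity $2\eta_n\eta_n'\psi_0\psi_0'=\bigl((\eta_n\psi_0)'\bigr)^2-(\eta_n')^2\psi_0^2-\eta_n^2(\psi_0')^2$ then converts the resulting relation into the displayed form.

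Combining the Agmon identity with the pointwise bound $(\eta_n')^2\le\theta^2(v-\varepsilon_0)\eta_n^2\mathds{1}_{(R,\infty)}$ and discarding the non-negative kinetic term gives, after using $\eta_n\equiv 1$ on $(0,R)$,
\[
\int_0^R(v-\varepsilon_0)\psi_0^2\,\ud x+(1-\theta^2)\int_R^\infty(v-\varepsilon_0)\eta_n^2\psi_0^2\,\ud x\le 0.
\]
Since $v-\varepsilon_0\ge\delta$ on $(R,\infty)$, this implies
\[
(1-\theta^2)\delta\int_R^\infty \eta_n^2\psi_0^2\,\ud x\le\int_0^R(\varepsilon_0-v)\psi_0^2\,\ud x<\infty,
\]
a bound uniform in $n$. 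Since $\eta_n^2\uparrow e^{2\theta\Phi}$ monotonically, monotone convergence yields $\int_R^\infty e^{2\theta\Phi}\psi_0^2\,\ud x<\infty$; adding the trivial contribution $\int_0^R e^{2\theta\Phi}\psi_0^2\,\ud x=\int_0^R\psi_0^2\,\ud x\le 1$ completes the proof.

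The main technical obstacle is the rigorous justification of the Agmon identity under the weak assumption $v\in L^1_{\mathrm{loc}}(\rz_+)$ with only $v_-$ bounded: one must verify the vanishing of the boundary terms and the absolute integrability of every piece produced by integration by parts. Both follow, for each fixed $n$, from the boundedness of $\eta_n$ combined with Proposition~\ref{Randterme} and the defining properties $\psi_0\in\cD[h]\subset H^1(\rz_+)$ and $v\psi_0^2\in L^1(\rz_+)$ (the latter because $\int v\psi_0^2\,\ud x<\infty$ and $v_-\in L^\infty(\rz_+)$).
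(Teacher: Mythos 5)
Your argument is correct and uses the same core ingredients as the paper: the truncated weight $\eta_n=e^{\min\{\theta\Phi,n\}}$ (the paper writes $e^{\phi_L}$ with $\phi_L=\theta\min\{\Phi,L\}$), the integration by parts justified by Proposition~\ref{Randterme}, the identity obtained by testing against (the equivalent of) $\eta_n^2\psi_0$, the pointwise bound $(\eta_n')^2\le\theta^2(v-\varepsilon_0)\eta_n^2$ on $(R,\infty)$, and monotone convergence to remove the truncation. The one genuine divergence is in how the contribution from $(0,R)$ is controlled. After reaching (an equivalent form of) your Agmon identity, the paper does \emph{not} discard the kinetic term: it splits $h[e^{\phi_L}\psi_0,e^{\phi_L}\psi_0]$ at $R$ and applies the min-max principle to the Neumann operator $h^N_R$ on $(0,R)$, invoking Proposition~\ref{CorollaryConvergenceEnergy} (that $\varepsilon^{(N)}_0(R)\to\varepsilon_0$) to obtain the lower bound $(\varepsilon_0-\delta)\int_0^R\psi_0^2\,\ud x$; the bounded-region contribution then collapses to $\delta\int_0^R\psi_0^2\le\delta$. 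You instead simply throw away $\int\bigl((\eta_n\psi_0)'\bigr)^2\ge 0$ and absorb the sign-indefinite quantity $\int_0^R(v-\varepsilon_0)\psi_0^2\,\ud x$ into the constant on the right, which is finite since $\psi_0\in\cD[h]$ (hence $\int v_+\psi_0^2<\infty$) and $v_-\in L^\infty$. Your route is slightly more direct, bypassing both the auxiliary operator $h^N_R$ and Proposition~\ref{CorollaryConvergenceEnergy}, at the cosmetic cost of a less explicit constant; the paper's route produces the clean uniform bound $2-\theta^2$ directly from the normalization $\|\psi_0\|_{L^2}=1$.
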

\begin{proof}
Let us take a sufficiently large $R>0$ such that $v(x)\ge \varepsilon_0$ for $x\ge R$; the value of $R$ will be adjusted later.
Define $\Phi$ as above, and for $L > 0$ define
\[
\phi_{L}(x):=\theta \min\{\Phi(x),L\}\,,
\]
then $\phi_{L} \in L^{\infty}(\mathbb{R}_+)$ and $\big|\phi^{\prime}_{L}(x)\big| \leq \theta \eins_{x> R}(x)\sqrt{v(x)-\varepsilon_0}$,
where $\eins_{x> R}$ stands for the indicator function of the set $\big\{x \in \mathbb{R}_+:x> R \big\}$.

Let us show first that
\begin{equation}
   \label{agm1}
\text{for any}\quad c\in\rz \quad \text{one has} \quad e^{c\phi_{L}}\psi_0\in \cD[h]\,.
\end{equation}
By construction, $e^{c\phi_{L}}\in L^\infty(\rz_+)$, so $e^{c\phi_{L}}\psi_0\in L^2(\rz_+)$ and
\[
\int_{\rz_+}v (e^{c\phi_{L}}\psi_0)^2\, \ud x<\infty \text{ due to } \int_{\rz_+}v \psi_0^2\, \ud x<\infty\,.
\]
Furthermore, $(e^{c\phi_{L}}\psi_0)'=c\phi'_L e^{c\phi_{L}}\psi_0+ e^{c\phi_{L}}\psi_0$,
and the second summand is in $L^2(\rz_+)$ due to $\psi_0\in H^1(\rz_+)$, while the first summand is finite
due to
\[
\int_{\rz_+}(\phi'_L e^{c\phi_{L}}\psi_0)^2\, \ud x\le \theta^2 e^{2cL} \int_R^\infty (v-\varepsilon_0)\psi_0^2\, \ud x<\infty\,.
\]
Hence, the claim \eqref{agm1} is proved.

Now we compute
\begin{multline*}
	h[e^{\phi_L}\psi_0,e^{\phi_L}\psi_0]=\int_{\rz_+} \Big( \big(\phi'_L e^{\phi_L}\psi_0+e^{\phi_L}\psi'_0\big)^2 +v (e^{\phi_L}\psi_0)^2\Big)\ud x\\
    =\int_{\rz_+}(\phi_{L}^{\prime})^2e^{2\phi_{L}}\psi_0^2\, \ud x + \int_{\rz_+}\Big((e^{2\phi_L}\psi_0)' \psi_0' +v \,e^{2\phi_L}\psi_0\, \psi_0\, \Big)\,\ud x\,.
\end{multline*}
Due to \eqref{agm1} one can transform the last summand on the right-hand side as
\begin{multline*}
\int_{\rz_+}\Big((e^{2\phi_L}\psi_0)' \psi_0' +v \,e^{2\phi_L}\psi_0\, \psi_0\,\Big)\, \ud x=h\big[e^{2\phi_L}\psi_0,\psi_0\big]\\
=\langle e^{2\phi_L}\psi_0, h\psi_0\rangle_{L^2(\rz_+)}=\varepsilon_0 \big\langle e^{2\phi_L}\psi_0, \psi_0\big\rangle_{L^2(\rz_+)}=
\varepsilon_0\int_{\rz_+} e^{2\phi_L}\psi_0^2\ud x\,,
\end{multline*}
which yields
\begin{equation}
  \label{q01}
h[e^{\phi_L}\psi_0,e^{\phi_L}\psi_0]=\int_{\rz_+}\Big( (\phi_{L}^{\prime})^2 + \varepsilon_0 \Big) e^{2\phi_L}\psi_0^2\,\ud x\,.
\end{equation}
Now let us pick any $\delta>0$. The min-max principle applied to $h^N_R$ gives
\begin{equation*}
\int_0^R\Big( \big((e^{\phi_{L}}\psi_0)'\big)^2+v(e^{\phi_{L}}\psi_0)^2 \Big)\, \ud x \geq \varepsilon^{(N)}_0(R)\int_{0}^R e^{2\phi_{L}}\psi_0^2 \, \ud x\,{ .}
\end{equation*}
{ Hence, for large $R > 0$ one has  $\varepsilon^{(N)}_0(R)\ge\varepsilon_0-\delta$} due to Proposition~\ref{CorollaryConvergenceEnergy},
and
\begin{multline*}
h[e^{\phi_{L}}\psi_0,e^{\phi_{L}}\psi_0]=\int_{\mathbb{R}_+}\Big( \big((e^{\phi_{L}}\psi_0)'\big)^2+v(e^{\phi_{L}}\psi_0)^2 \Big) \, \ud x\\
=\int_0^R \Big( \big((e^{\phi_{L}}\psi_0)'\big)^2+v(e^{\phi_{L}}\psi_0)^2 \Big)\, \ud x
+ \int_R^\infty\Big( \big(e^{\phi_{L}}\psi_0)'\big)^2+v(e^{\phi_{L}}\psi_0)^2 \Big) \, \ud x\\
\ge \varepsilon^{(N)}_0(R) \int_{0}^{R}e^{2\phi_{L}}\psi_0^2 \, \ud x + \int_R^\infty ve^{2\phi_{L}}\psi_0^2 \, \ud x\\
\ge (\varepsilon_0-\delta)\int_{0}^{R}e^{2\phi_{L}}\psi_0^2 \, \ud x + \int_R^\infty ve^{2\phi_{L}}\psi_0^2 \, \ud x\,.
\end{multline*}
By combining this last inequality with \eqref{q01} we arrive at
\begin{equation*}
\int_{\mathbb{R}_+}\left[(\phi_{L}^{\prime})^2+\varepsilon_0\right]e^{2\phi_{L}}\psi_0^2 \, \ud x
\geq (\varepsilon_0-\delta)\int_{0}^{R}e^{2\phi_{L}}\psi_0^2 \, \ud x +\int_{R}^{\infty}v e^{2\phi_{L}}\psi_0^2 \, \ud x\,.
\end{equation*}
This rewrites as
\[
\int_{0}^{R} \big((\phi_{L}^{\prime})^2+\delta)e^{2\phi_{L}}\psi_0^2 \, \ud x
\ge
\int_{R}^{\infty} \big(v-\varepsilon_0-(\phi_{L}^{\prime})^2\big) e^{2\phi_{L}}\psi_0^2 \, \ud x\,
\]
and taking into account the above choice of $R$ and $\phi_L$ we arrive at
\begin{align*}
\delta\int_{0}^{R} \psi_0^2 \, \ud x &\ge
\int_{R}^{\infty} \big(v-\varepsilon_0-(\phi_{L}^{\prime})^2\big) e^{2\phi_{L}}\psi_0^2 \, \ud x\\
&\ge (1-\theta^2)\int_{R}^{\infty} (v-\varepsilon_0) e^{2\phi_{L}}\psi_0^2 \, \ud x\,.
\end{align*}
As $\delta>0$ was arbitrary, we may assume that $\delta<v_\infty-\varepsilon_0$, then for large $R$ one has $v-\varepsilon_0\ge \delta$ in $(R,\infty)$,
and it follows from the preceding inequality that
\[
\int_{0}^{R} \psi_0^2 \, \ud x\ge (1-\theta^2)\int_{R}^{\infty}  e^{2\phi_{L}}\psi_0^2 \, \ud x\,.
\]
Consequently,
\begin{multline*}
\int_{0}^{\infty}e^{2\phi_{L}}\psi_0^2\, \ud x=\int_{0}^{R}e^{2\phi_{L}}\psi_0^2\, \ud x+\int_{R}^{\infty}e^{2\phi_{L}}\psi_0^2\, \ud x\\
=\int_{0}^{R}\psi_0^2\, \ud x+\int_{R}^{\infty}e^{2\phi_{L}}\psi_0^2\, \ud x \le \big(1+(1-\theta^2)\big)\int_{0}^{R}\psi_0^2\, \ud x\le  2-\theta^2,
\end{multline*}
or, in a detailed form,
\[
\int_{0}^{\infty}\exp \Big( 2\theta \min\big\{\Phi(x),L\big\}\Big)\psi_0^2\, \ud x
\le 2-\theta^2.
\]
As the constant on the right-hand side is independent of the choice of $L$, the statement then follows by taking the limit $L \rightarrow \infty$.
\end{proof}

We prefer to give a simplified version of the preceding estimate, which will be easier to use in the main text:
\begin{cor}\label{cora7}
For some $a>0$ there holds
\[
\int_{\rz_+} e^{ax} \psi_0(x)^2\, \ud x<\infty\,.
\]
\end{cor}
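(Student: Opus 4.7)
The plan is to derive Corollary~\ref{cora7} as a direct consequence of Proposition~\ref{AgmonEstimate}, exploiting the hypothesis $v_\infty > \varepsilon_0$ to replace the (possibly slowly growing) Agmon weight $\Phi$ by a linear function.

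First, I would fix any $\theta \in (0,1)$ (for concreteness, $\theta = 1/2$) and apply Proposition~\ref{AgmonEstimate} to obtain an $R > 0$ with $v(x) \geq \varepsilon_0$ for $x \geq R$ such that
\[
\int_{\rz_+} e^{2\theta \Phi(x)} \psi_0(x)^2\, \ud x < \infty,
\]
where $\Phi$ is as in that proposition. Next, using the assumption \eqref{vvv} that $v_\infty > \varepsilon_0$, I would enlarge $R$ if necessary so that $v(t) - \varepsilon_0 \geq (v_\infty - \varepsilon_0)/2 =: c > 0$ for all $t \geq R$. This step only requires the definition of $\liminf$ at infinity.

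With this choice, for $x \geq R$ one has the linear lower bound
\[
\Phi(x) = \int_R^x \sqrt{v(t) - \varepsilon_0}\, \ud t \geq \sqrt{c}\,(x - R),
\]
so setting $a := 2\theta \sqrt{c}$ yields $e^{ax} \leq e^{aR} e^{2\theta \Phi(x)}$ for $x \geq R$. Consequently,
\[
\int_R^\infty e^{ax}\psi_0(x)^2\, \ud x \leq e^{aR}\int_R^\infty e^{2\theta\Phi(x)}\psi_0(x)^2\, \ud x < \infty.
\]
On the bounded interval $(0,R)$ one has the trivial estimate $\int_0^R e^{ax}\psi_0(x)^2\, \ud x \leq e^{aR}\|\psi_0\|_{L^2(\rz_+)}^2 < \infty$, and adding the two contributions gives the claim.

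There is essentially no obstacle here: the statement of Proposition~\ref{AgmonEstimate} already contains all the analytic content, and the corollary only repackages it into a form that is more convenient in the main text (in particular, it is exactly what is needed after \eqref{zr1} to obtain exponential decay of the effective potential $Z_R$).
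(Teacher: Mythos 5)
Your proof is correct and follows essentially the same route as the paper: invoke Proposition~\ref{AgmonEstimate}, use the hypothesis $v_\infty>\varepsilon_0$ to get a linear lower bound on $\Phi$, and conclude by monotonicity of the exponential. The only cosmetic difference is that the paper keeps the $R$ from Proposition~\ref{AgmonEstimate} fixed and absorbs the discrepancy into an additive constant $c$ in the bound $\Phi(x)\ge\sqrt{b}(x-R)-c$, whereas you shift $R$ outward; both bookkeeping choices are fine since $\Phi$ is nonnegative and nondecreasing, so enlarging the threshold only weakens the linear lower bound harmlessly.
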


\begin{proof}
Due to the assumption \eqref{vvv} on $v$, for some $b>0$ one has $v(x)-\varepsilon_0\ge b$
for large $x$, and then the function $\Phi$ in Proposition~\ref{AgmonEstimate} satisfies the inequality
$\Phi(x)\ge \sqrt{b}\,(x-R)-c$ for all $x$ (with a fixed $c>0$), which leads to
\begin{align*}
\int_{\rz_+} e^{2\theta \sqrt{b}\, x}\psi_0^2\ud x&= e^{2\theta \sqrt{b}\, R+2\theta c} \int_{\rz_+} e^{2\theta \sqrt{b}\, (x-R)-2\theta c}\psi_0^2\ud x\\
&\le e^{2\theta \sqrt{b}\, R+2\theta c}\int_{\rz_+} e^{2\theta \Phi} \psi_0^2\,\ud x<\infty\,,
\end{align*}
which gives the claim with $a:=2\theta \sqrt{b}$.
\end{proof}

We finish this appendix by mentioning two classical cases for which the assumption \eqref{vvv} is satisfied.
Recall that 
\[
v_\infty:=\liminf_{x\to +\infty} v(x)\,.
\]

\begin{prop}\label{vinf} There holds $\inf\sigma_\mathrm{ess}(h)\ge v_\infty$.
\end{prop}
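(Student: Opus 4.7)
The plan is to prove the bound via an operator bracketing argument combined with the min-max principle, by cutting $\rz_+$ at a point $R$ beyond which $v$ is close to $v_\infty$.

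Fix $\eps>0$ with $\eps<v_\infty$ (the case $v_\infty=-\infty$ being vacuous, and $v_\infty=+\infty$ being handled by the same argument letting $\eps\to+\infty$). By definition of $\liminf$, choose $R>0$ such that $v(x)\ge v_\infty-\eps$ for a.e.\ $x>R$. Introduce the Neumann-decoupled operators $h^N_R$ on $L^2(0,R)$ (as already defined before Proposition~\ref{CorollaryConvergenceEnergy}) and $\widetilde h^N_R$ on $L^2(R,\infty)$, given by the same formal expression with Neumann condition at $R$. Then any $\varphi\in\cD[h]$ has its restrictions $\varphi|_{(0,R)}$ and $\varphi|_{(R,\infty)}$ respectively in $\cD[h^N_R]$ and $\cD[\widetilde h^N_R]$, and
\[
h[\varphi,\varphi]=h^N_R[\varphi|_{(0,R)},\varphi|_{(0,R)}]+\widetilde h^N_R[\varphi|_{(R,\infty)},\varphi|_{(R,\infty)}],
\]
so the form of $h^N_R\oplus\widetilde h^N_R$ is an extension of that of $h$. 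By the min-max principle this gives $\inf\sigma_\mathrm{ess}(h)\ge\inf\sigma_\mathrm{ess}(h^N_R\oplus\widetilde h^N_R)$.

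Next I would analyze each summand. The form domain of $h^N_R$ is contained in $H^1(0,R)$, which is compactly embedded in $L^2(0,R)$; the lower bound $v\ge -\|v_-\|_\infty$ and the form-core argument then imply that $h^N_R$ has compact resolvent, hence purely discrete spectrum, so $\sigma_\mathrm{ess}(h^N_R)=\emptyset$. For $\widetilde h^N_R$ I use the pointwise bound $v\ge v_\infty-\eps$ on $(R,\infty)$ to estimate
\[
\widetilde h^N_R[\varphi,\varphi]=\int_R^\infty\big(\varphi'(x)^2+v(x)\varphi(x)^2\big)\,\ud x\ge (v_\infty-\eps)\|\varphi\|_{L^2(R,\infty)}^2,
\]
so $\widetilde h^N_R\ge v_\infty-\eps$ and in particular $\inf\sigma_\mathrm{ess}(\widetilde h^N_R)\ge v_\infty-\eps$. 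Combining,
\[
\inf\sigma_\mathrm{ess}(h)\ge\inf\sigma_\mathrm{ess}(h^N_R\oplus\widetilde h^N_R)\ge v_\infty-\eps,
\]
and letting $\eps\to 0^+$ concludes the proof.

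The only mildly delicate point is verifying that $h^N_R$ genuinely has compact resolvent under our weak assumptions on $v$ (which may be very singular near $0$); this follows because the form domain is contained in $H^1(0,R)$ with a norm dominated by the graph norm of $h^N_R$ (thanks to $v_-\in L^\infty$), and $H^1(0,R)\hookrightarrow L^2(0,R)$ is compact. Apart from this bookkeeping, the argument is a standard Neumann bracketing.
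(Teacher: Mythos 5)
Your proof is correct and follows essentially the same Neumann-bracketing argument as the paper: cut at $R$, discard $h^N_R$ since its essential spectrum is empty, and bound $\widetilde h^N_R$ from below by the infimum of $v$ on $(R,\infty)$. The only cosmetic difference is parametrization---the paper fixes an arbitrary $a<v_\infty$ and lets $a\uparrow v_\infty$, which handles the case $v_\infty=+\infty$ uniformly without the slightly awkward ``$v_\infty-\eps$'' bookkeeping in your parenthetical remark.
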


\begin{proof}
Let { $\widetilde{h}_{L}$ }be the operator on $L^{2}(L,\infty)$ given by its bilinear form
\begin{gather*}
\widetilde{h}_{L}[\varphi,\varphi]:=\int_{L}^{\infty}\big((\varphi')^2+v \varphi^2\big) \, \ud x\,,\\
\cD[\widetilde{h}_{L}]:=\big\{\varphi\in H^1(L,\infty):\ \widetilde{h}_{L}[\varphi,\varphi] < \infty  \big\}\,.
\end{gather*}
{ Then the min-max principle }implies $\inf\sigma_\mathrm{ess}(h)\ge \inf\sigma_\mathrm{ess}(h^N_L \oplus \widetilde{h}_{L})$ for any $L>0$.
The operator $h^N_L$ has compact resolvent and an empty essential spectrum, hence, $\inf\sigma_\mathrm{ess}(h)\ge \inf\sigma_\mathrm{ess}(\widetilde{h}_{L})$.
For any $a<v_\infty$ one can choose a large $L>0$ to have $v\ge a$ in $(L,\infty)$, which leads to $\inf\sigma(\widetilde h_L)\ge a$.
It follows that
\[
\inf\sigma_\mathrm{ess}(h)\ge \inf\sigma_\mathrm{ess}(\widetilde{h}_{L})\ge \inf\sigma(\widetilde h_L)\ge a\,.
\]
As $a<v_\infty$ is arbitrary, this gives the result.
\end{proof}

\begin{prop}\label{cond1}
If $v_\infty=+\infty$, then the bottom of the spectrum of $h$ is an
isolated eigenvalue $\varepsilon_0$ with $\varepsilon_0<v_\infty$.
\end{prop}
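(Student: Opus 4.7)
The proof is essentially an immediate consequence of Proposition~\ref{vinf} combined with the semi-boundedness of $h$, so the plan is short.

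First, I would observe that $h$ is bounded below. The assumption $v_-:=\max\{-v,0\}\in L^\infty(\rz_+)$ immediately gives, for every $\varphi\in\cD[h]$,
\[
h[\varphi,\varphi]=\int_{\rz_+}\big((\varphi')^2+v\varphi^2\big)\,\ud x\ge -\|v_-\|_\infty\|\varphi\|^2_{L^2(\rz_+)},
\]
so $\varepsilon_0:=\inf\sigma(h)>-\infty$ is a well-defined real number. In particular, $\varepsilon_0<+\infty=v_\infty$, which takes care of the strict inequality in the statement for free.

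Second, I would invoke Proposition~\ref{vinf}, which yields $\inf\sigma_{\mathrm{ess}}(h)\ge v_\infty=+\infty$. Consequently the essential spectrum of $h$ is empty, and the spectrum of $h$ consists entirely of isolated eigenvalues of finite multiplicity without any finite accumulation point.

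Third, since $\sigma(h)\subset[\varepsilon_0,+\infty)$ is discrete, the bottom $\varepsilon_0$ itself belongs to $\sigma(h)$ and is an isolated eigenvalue of finite multiplicity. This closes the argument.

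There is no real obstacle here; everything reduces to already established facts. The only thing one must verify is that Proposition~\ref{vinf} applies under condition (A) alone (which is the case, as its proof uses only the form definition of $h$ and the domain-monotonicity/min-max principle), so that invoking it in the regime $v_\infty=+\infty$ is legitimate.
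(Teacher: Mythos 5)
Your proposal is correct and follows essentially the same route as the paper: invoke Proposition~\ref{vinf} to conclude $\sigma_{\mathrm{ess}}(h)=\emptyset$, deduce that the bottom of the (discrete, nonempty, closed) spectrum is an isolated eigenvalue, and observe that $\varepsilon_0<v_\infty=+\infty$ is simply the finiteness of $\varepsilon_0$. You spell out the semi-boundedness of $h$ a bit more explicitly than the paper does, but the argument is the same.
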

\begin{proof}
In this case $\sigma_\text{ess}(h)=\emptyset$ by Proposition~\ref{vinf},
i.e. $h$ is with compact resolvent. Its lowest eigenvalue $\varepsilon_0$
is then automatically
isolated, and the inequality $\varepsilon_0<v_\infty$ is just the finiteness of $\varepsilon_0$.
\end{proof}

\begin{prop}\label{cond2}
Assume that $v_\infty<+\infty$ and that $v-v_\infty\in L^1(\rz_+)$ with
\[
\int_{\rz_+} \big(v(x)-v_\infty\big)\ud x<0\,,
\]
then the bottom $\varepsilon_0$ of the spectrum of $h$ is an isolated eigenvalue, and it satisfies
$\varepsilon_0<v_\infty$.
\end{prop}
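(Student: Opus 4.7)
The plan is to combine Proposition~\ref{vinf} with a variational argument via the min-max principle. Since $v_-\in L^\infty(\rz_+)$ gives $h\ge -\|v_-\|_\infty$, the operator $h$ is semibounded and $\varepsilon_0:=\inf\sigma(h)>-\infty$. Proposition~\ref{vinf} yields $\inf\sigma_\mathrm{ess}(h)\ge v_\infty$, so the whole statement reduces to producing a single trial function $\varphi\in\cD[h]$ satisfying the strict inequality $h[\varphi,\varphi]<v_\infty\,\|\varphi\|_{L^2(\rz_+)}^2$. The min-max principle then forces $\varepsilon_0<v_\infty\le\inf\sigma_\mathrm{ess}(h)$, so $\varepsilon_0$ lies strictly below the essential spectrum and is automatically an isolated eigenvalue.

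For the trial function I would use a spreading plateau. Fix $\chi\in C^\infty(\rz)$ with $0\le\chi\le 1$, $\chi(y)=1$ for $y\le 1$ and $\chi(y)=0$ for $y\ge 2$, and set $\varphi_L(x):=\chi(x/L)$ for $L\ge 1$. These functions belong to $H^1(\rz_+)$ with compact support, and the bound $v_+\le|v-v_\infty|+|v_\infty|$ combined with the hypothesis $v-v_\infty\in L^1(\rz_+)$ gives
\[
\int_{\rz_+} v_+\varphi_L^2\,\ud x\le |v_\infty|\,\|\varphi_L\|_{L^2(\rz_+)}^2+\|v-v_\infty\|_{L^1(\rz_+)}<\infty,
\]
so $\varphi_L\in\cD[h]$ for every $L$.

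The key identity is
\[
h[\varphi_L,\varphi_L]-v_\infty\,\|\varphi_L\|_{L^2(\rz_+)}^2=\int_{\rz_+}\varphi_L'(x)^2\,\ud x+\int_{\rz_+}\bigl(v(x)-v_\infty\bigr)\varphi_L(x)^2\,\ud x.
\]
A change of variable shows that the kinetic term equals $L^{-1}\int_0^2\chi'(y)^2\,\ud y=O(1/L)$, while $0\le\varphi_L\le 1$ and $\varphi_L\to 1$ pointwise on $\rz_+$, so dominated convergence with the integrable majorant $|v-v_\infty|$ yields
\[
\lim_{L\to\infty}\int_{\rz_+}\bigl(v(x)-v_\infty\bigr)\varphi_L(x)^2\,\ud x=\int_{\rz_+}\bigl(v(x)-v_\infty\bigr)\,\ud x<0.
\]
Hence for all $L$ sufficiently large the right-hand side of the key identity is strictly negative, and the required test function is found.

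The only mildly delicate point is ensuring $\varphi_L\in\cD[h]$ in the presence of a possibly singular positive part of~$v$, which the bound above handles cleanly. Everything else is a quantitative form of the heuristic that an arbitrarily slowly spread density is penalized only by a vanishing kinetic cost, while the attractive tail $v-v_\infty$ contributes a fixed negative amount.
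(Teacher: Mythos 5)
Your proof is correct and follows essentially the same strategy as the paper: use Proposition~\ref{vinf} to bound the essential spectrum below by $v_\infty$, then exhibit a trial function with Rayleigh quotient strictly below $v_\infty$ via the min-max principle. The only difference is cosmetic — the paper uses the trial family $\varphi_\delta(x)=e^{-\delta x}$ with $\delta\to 0^+$ (for which the kinetic term $\delta/2$ and the potential term are computed directly), whereas you use a compactly supported plateau $\chi(x/L)$ with $L\to\infty$; both families converge pointwise to $1$, have vanishing Dirichlet energy, and let dominated convergence recover the negative integral $\int(v-v_\infty)$.
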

\begin{proof}
In view of Proposition~\ref{vinf} it is sufficient to establish
the existence of eigenvalues in $(-\infty,v_\infty)$,
for which it is sufficient to find a function $\varphi\in \cD[h]$ with
$h[\varphi,\varphi]-v_\infty \|\varphi\|^2_{L^2(\rz_+)}<0$.

For $\delta>0$ consider $\varphi:x\mapsto e^{-\delta x}$, then $\varphi\in \cD[h]$ with
\begin{align*}
h[\varphi,\varphi]-v_\infty \|\varphi\|^2_{L^2(\rz_+)}&=\delta^2\int_{\rz_+} e^{-2\delta x}\,\ud x
+\int_{\rz_+} \big(v(x)-v_\infty\big)e^{-2\delta x}\ud x\\
&=\dfrac{\delta}{2} +\int_{\rz_+} \big(v(x)-v_\infty\big)e^{-2\delta x}\ud x\,,
\end{align*}
and the right-hand side converges to a strictly negative limit as $\delta\to 0^+$.
\end{proof}

{
\begin{remark}\label{lastrem}
It is easily seen that all assertions of this Appendix, except Proposition~\ref{cond2},
remain valid for if one replaces the operator $h$
by the operator $h_0$ defined in \eqref{eqh0}, which provides
necessary technical components to prove Proposition~\ref{prop23}.
\end{remark}
}

%
%
%
\end{document}